\DeclarePairedDelimiter{\ceil}{\lceil}{\rceil}
\DeclarePairedDelimiter{\floor}{\lfloor}{\rfloor}
\theoremstyle{theorem}
\newtheorem{theorem}{Theorem}
\newtheorem{prop}{Proposition}
\newtheorem{lemma}{Lemma}
\newtheorem{corollary}{Corollary}[theorem]
\theoremstyle{remark}
\theoremstyle{plain}
\theoremstyle{definition}
\newcommand{\hil}{\mathcal H}
\newcommand{\C}{\mathbb C}
\newcommand{\ts}[2]{\left(#1\right)^{\otimes#2}}
\begin{document}

\preprint{APS/123-QED}

\title{On generating $r$-uniform subspaces with the isometric mapping method}

\author{K. V. Antipin}
 \email{kv.antipin@physics.msu.ru}
 \affiliation{Faculty of Physics, M. V. Lomonosov Moscow State University,\\ Leninskie gory, Moscow 119991, Russia}

\date{\today}

\begin{abstract}
We propose a compositional approach  to construct subspaces consisting entirely  of $r$-uniform states, including the ones in heterogeneous systems. The approach allows one to construct new objects from old ones: it combines encoding isometries of pure quantum error correcting codes with entangled multipartite states and  subspaces. The presented methods can be also used to construct new pure quantum error correcting codes from certain combinations of  old ones. The approach is illustrated with various examples including constructions of $2$-, $3$-,  $4$-, $5$-uniform subspaces. The results are then compared with analogous constructions obtained with the use of orthogonal arrays. 
\end{abstract}

\maketitle


\section{\label{sec:intro}Introduction}

Multipartite entanglement is crucial for realization of various protocols of quantum information processing~\cite{JzL03,RB01,Scott04,MEO18}. One important manifestation of this phenomenon is genuine multipartite entanglement~(GME)~\cite{Svet87,GHZ92,DVC00}. In GME states  entanglement is present in every bipartite cut of a compound system, which makes them useful in  communication protocols such as quantum teleportation and dense coding~\cite{YeCh06,MP08}. Another interesting form  is $r$-uniform~(also known as maximal) entanglement~\cite{Scott04,FFPP08,AC13}. Each reduction of an $r$-uniform state to $r$ subsystems is maximally mixed. This property is closely related to quantum secret sharing~\cite{CGL99,HCLRL12} and quantum error correcting codes~(QECCs)~\cite{KLaf97,HuGra20}.

Recently the notion of entangled subspaces  has been attracting  much attention owing to its potential use in quantum information science. It was first  described in Ref.~\cite{Parth04}, where the term ``completely entangled subspaces'' was coined. Later, depending on the form of multipartite entanglement present in each state of a subspace, several other types were introduced: genuinely entangled subspaces~(GESs)~\cite{DemAugWit18}, negative partial transpose~(NPT) subspaces~\cite{NJohn13},  $r$-uniform subspaces~(rUSs)~\cite{HuGra20}. In the present paper we concentrate on construction of $r$-uniform subspaces, mostly for heterogeneous systems, i.~e., those having different local dimensions.  There are a number of tools for constructing $r$-uniform states in homogeneous systems:  graph states~\cite{Helw13}, elements of combinatorial design such as Latin squares~\cite{GALRZ15}, symmetric matrices~\cite{FJXY15}, orthogonal arrays~(OAs)~\cite{GZ14} and their  variations~\cite{PZLZ19,PZDW20,PXC22}. For construction of $r$-uniform states in heterogeneous systems OAs  were extended to mixed orthogonal arrays~(MOAs)~\cite{GBZ16}. Recent developments of this method can be found in Refs.~\cite{PZFZ21,SSCZ22}. The main source for $r$-uniform subspaces in homogeneous systems are pure quantum error correcting codes~\cite{Scott04, HuGra20}. Little is known about construction  of $r$-uniform subspaces in heterogeneous systems~(the only method we could find in literature was based on Proposition 12 of Ref.~\cite{SSCZ22}). Development of new methods of construction of rUSs for this case is our main motivation for the present paper. $R$-uniform subspaces in heterogeneous systems have relation  to QECCs over mixed alphabets~\cite{WYFO13} and quantum information masking~\cite{SLCZ21}. To our knowledge, for a given system the largest possible dimension of rUSs is unknown, so building new instances of such subspaces could bring some insights in this question.

We use compositional tools of diagrammatic reasoning~\cite{CoeKis17,Biamonte19,BiamonteEtAll15}, which allow us to come up with new constructions and provide further instances of states and subspaces with important properties. Tensor diagrams are widely used in quantum information  theory, in particular, in theory of QECCs. Recently a framework for the construction of new stabilizer QECCs from old ones with the use of tensor networks has been presented in Ref.~\cite{CaoLack22}.

The paper is organized as follows. In Section~\ref{sec:prel} necessary definitions  and some theoretical background are given. The main results of the current paper are provided in Section~\ref{sec:res}. In Subsection~\ref{sec:res:runi} we give diagrammatic representation of  basic properties of rUSs upon which, in Subsection~\ref{sec:res:build}, we derive the methods of constructing rUSs in heterogeneous systems such as glueing several subspaces together, eliminating  parties, combining pure error correcting codes and maximally entangled states and subspaces. In Subsection~\ref{sec:res:comp} we compare our results with the ones obtained with the use of the mixed orthogonal arrays method. Finally, in Section~\ref{sec:conc} we conclude with discussing possible directions of further research.

\section{\label{sec:prel} Preliminaries}

Let us first give the definition of $r$-uniform states of an $n$-partite finite-dimensional system with local dimensions $d_1,\,\ldots,\,d_n$. Such a system is usually associated with the tensor product Hilbert space $\mathbb{C}^{d_1}\otimes\ldots\otimes\mathbb{C}^{d_n}$.
A state $\ket{\psi}$ in $\mathbb{C}^{d_1}\otimes\ldots\otimes\mathbb{C}^{d_n}$ is called \emph{$r$-uniform} if all its reductions \emph{at least} to $r$ parties are maximally mixed, i.~e., 
\begin{equation}
    \mathrm{Tr}_{\{i_1,\,\ldots,\,i_r\}^c}[\dyad{\psi}]=\frac1{ d_{i_1}\cdot\ldots\cdot d_{i_r}}\,I_{i_1,\,\ldots,\,i_r}
\end{equation}
for all $r$-element subsets $\{i_1,\,\ldots,\,i_r\}$ of the set $\{1,\, \ldots,\, n\}$. Here $\{i_1,\,\ldots,\,i_r\}^c$  denotes the complement of the given set in  the set of all parties. It is clear that $r$-uniform state is also $l$-uniform for all $l<r$. By the properties of the Schmidt decomposition, the necessary condition for $r$-uniform states to exist is that $d_{i_1}\cdot\ldots\cdot d_{i_r}\leqslant d_{i_{r+1}}\cdot\ldots\cdot d_{i_n}$ is satisfied for each bipartition $i_1,\,\ldots,\,i_r |i_{r+1},\,\ldots,\,i_n$. 

An \emph{$r$-uniform subspace} --- a subspace of $\mathbb{C}^{d_1}\otimes\ldots\otimes\mathbb{C}^{d_n}$ consisting entirely of $r$-uniform vectors.

\begin{figure}[t]
\includegraphics[scale=0.3]{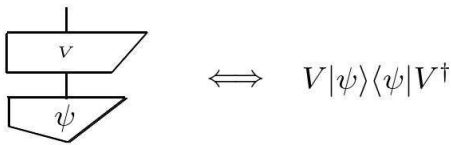}
\caption{\label{fig:dblV} Doubling notation for the process of action of a linear operator $V$ on a pure state $\psi$}
\end{figure}

\begin{figure}[t]
\includegraphics[scale=0.3]{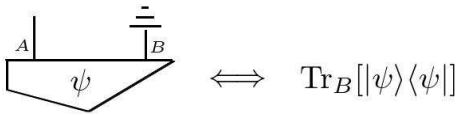}
\caption{\label{fig:dblT} Reduction of a bipartite pure state $\psi$ to subsystem $A$}
\end{figure}

For homogeneous systems, i.~e., those having equal local dimensions, the existence of $r$-uniform subspaces can be deduced from the existence of certain quantum error correcting codes~(QECCs). Recall that a QECC $((n,\,K,\,d))_D$ is a special  $K$-dimensional subspace of $\ts{\C^D}{n}$ such that for each its state any error affecting not more than a certain number of subsystems can be corrected. For a code with distance $d=2t+1$ the number is equal to $t$. In addition, a code with distance $d$ can detect $d-1$ errors.

In addition to the  $((n,\,K,\,d))_D$ notation for general QECCs, we will use the designation $[[n,\,k,\,d]]_D$ for stabilizer QECCs. While the symbols $n$ and $d$ from the latter notation have the same sense as those in the former one, the dimension of the codespace for the code $[[n,\,k,\,d]]_D$  is equal to $D^k$.

A quantum error correcting code  is called \emph{pure} if 
\begin{equation}
\bra{i}E\ket{j}=0
\end{equation}
for any states $\ket{i},\,\ket{j}$ from an orthonormal set spanning the code space and for any error operator $E$ with weight strictly less than the distance of the code.

It is known that each pure QECC $((n, K, d))_D$ yields a $K$-dimensional $(d-1)$-uniform subspace of $(\mathbb C^D)^{\otimes n}$, and vice versa~\cite{HuGra20}.

To address the case of heterogeneous systems, in the present paper we will use encoding isometries of the existing pure QECCs in combination with various states and subspaces of lower number of parties. A similar approach dealing with isometric mapping to entangled subspaces proved to be effective in constructing multipartite genuinely entangled subspaces~\cite{KVAnt21}. 

Throughout the paper we use tensor diagrams, in particular, we use doubled-process theory notation adopted from Ref.~\cite{CoeKis17}. The doubling notation indicates the passage from pure state vectors to  their associated density operators, as shown on Fig.~\ref{fig:dblV}. 

To deal also with mixed states, the \emph{discarding symbol~(map)} is used. Applying the discarding map to a subsystem of  a multipartite state is equivalent to tracing out the subsystem, as shown on Fig.~\ref{fig:dblT}.

\begin{figure}[t]
\includegraphics[scale=0.3]{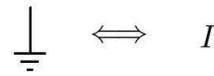}
\caption{\label{fig:mxms} Diagrammatic representation of the maximally mixed state~(up to the normalization factor).}
\end{figure}

The adjoint of the discarding map~(see Fig.~\ref{fig:mxms}) denotes the identity operator, which is proportional to the maximally mixed state.

\section{\label{sec:res}Results}

\subsection{\label{sec:res:runi}Basic properties and their diagrammatic representation}

We start with pointing at an important basic property of  subspaces under consideration. Let $\ket{\phi}$ and $\ket{\chi}$ be two mutually orthogonal normalized vectors in an $r$-uniform subspace $W$. An arbitrary~(normalized) linear combination $\ket{\psi}=\alpha\ket{\phi} + \beta\ket{\chi}$ is also in $W$, and hence its reduction to some $r$-element subset $S$ of the set of all parties yields
\begin{multline}
    \mathrm{Tr}_{S^c}[\dyad{\psi}]=  N\,I_S\\
    = N\,I_S + \alpha\beta^*\,\mathrm{Tr}_{S^c}[\dyad{\phi}{\chi}] + \beta\alpha^*\,\mathrm{Tr}_{S^c}[\dyad{\chi}{\phi}],
\end{multline}
where $N=\left(\prod_{i\in S}d_i\right)^{-1}$, the normalization factor. Consequently, the last two terms sum up to zero:
\begin{multline}
    \alpha\beta^*\,\mathrm{Tr}_{S^c}[\dyad{\phi}{\chi}] + \beta\alpha^*\,\mathrm{Tr}_{S^c}[\dyad{\chi}{\phi}] = 0,\\
    \forall\,\alpha,\,\beta\in\C, \quad\abs{\alpha}^2+\abs{\beta}^2 = 1.
\end{multline}
Setting first $\alpha$ real and $\beta$ imaginary and then both of them  real, one can deduce that
\begin{equation}
    \mathrm{Tr}_{S^c}[\dyad{\phi}{\chi}] = 0.
\end{equation}
Now we can formulate this property as
\begin{lemma}\label{rlem}
   For any orthonormal set $\{\ket{\psi}_i\}$ spanning $r$-uniform subspace it follows that
\begin{equation}\label{TrOrth}
    \mathrm{Tr}_{S^c}[\dyad{\psi_i}{\psi_j}] \sim \delta_{ij}\,I_S
\end{equation}
for any $r$-element subset $S$ of the set of all parties. 
\end{lemma}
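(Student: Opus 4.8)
The plan is to split the claim into the diagonal case $i=j$ and the off-diagonal case $i\neq j$, and to observe that each is already within reach of the computation established just above the statement.

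For the diagonal entries the result is immediate from the definition. Each basis vector $\ket{\psi_i}$ lies in the $r$-uniform subspace, so by $r$-uniformity its reduction to any $r$-element subset $S$ equals $N\,I_S$ with $N=\left(\prod_{k\in S}d_k\right)^{-1}$. Hence $\mathrm{Tr}_{S^c}[\dyad{\psi_i}{\psi_i}]=N\,I_S$, which is exactly the $i=j$ instance of Eq.~\eqref{TrOrth}, the symbol $\sim$ absorbing the factor $N$.

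For the off-diagonal entries the plan is to reuse the two-vector argument preceding the lemma, taking $\ket{\phi}=\ket{\psi_i}$ and $\ket{\chi}=\ket{\psi_j}$ for a fixed pair $i\neq j$. Since the set is orthonormal, these two vectors are mutually orthogonal and normalized, so every normalized linear combination $\alpha\ket{\psi_i}+\beta\ket{\psi_j}$ again lies in the subspace and is $r$-uniform. Expanding its reduction and cancelling the two diagonal contributions (each already equal to $N\,I_S$) forces the cross terms to sum to zero for all admissible $\alpha,\,\beta$; choosing the phases as indicated above (first $\alpha$ real with $\beta$ imaginary, then both real) then yields $\mathrm{Tr}_{S^c}[\dyad{\psi_i}{\psi_j}]=0$. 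Running this over every pair $i\neq j$ and every $r$-element $S$ completes the off-diagonal case.

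The only point that needs care, and the step I would flag as the crux, is the passage from ``the cross terms vanish for all $\alpha,\,\beta$'' to ``each cross term vanishes separately.'' This is the finite-dimensional polarization step, and it is exactly where the two phase choices enter: the imaginary-$\beta$ choice isolates the antisymmetric combination while the purely real choice isolates the symmetric one, so that the sum and the difference of the two operators $\mathrm{Tr}_{S^c}[\dyad{\psi_i}{\psi_j}]$ and $\mathrm{Tr}_{S^c}[\dyad{\psi_j}{\psi_i}]$ both vanish, giving each one separately. Everything else is bookkeeping, and no property of the spanning set beyond orthonormality and membership in the subspace is used.
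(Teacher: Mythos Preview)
Your proposal is correct and follows exactly the paper's own argument: the diagonal case is immediate from $r$-uniformity, and the off-diagonal case is precisely the two-vector polarization computation given just before the lemma, specialized to $\ket{\phi}=\ket{\psi_i}$ and $\ket{\chi}=\ket{\psi_j}$. There is no substantive difference in method or in the handling of the crux step.
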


Eq.~(\ref{TrOrth}) is known to hold for codewords of pure QECCs of distance $r+1$, which correspond to $r$-uniform subspaces for homogeneous systems, but it is worth stressing that it is valid for more general case of heterogeneous systems.

\begin{figure}[t]
\includegraphics[scale=0.3]{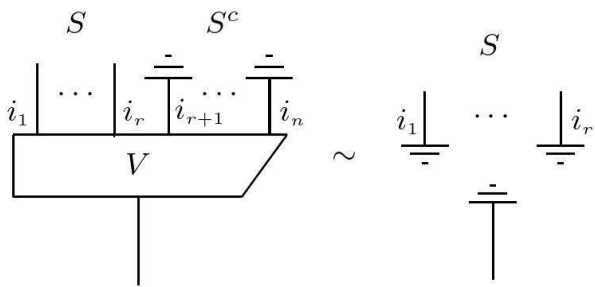}
\caption{\label{fig:rIso} Action of $V$ together with tracing out subsystems $S^c$ results in a channel $\Phi_S$ which discards its input and returns the maximally mixed state.}
\end{figure}

Let us think of $r$-uniform subspaces in terms of isometries and quantum channels. To each such subspace $W$ with dimension $K$ one can relate an isometry $V\colon\mathbb{C}^K\rightarrow \mathbb{C}^{d_1}\otimes\ldots\otimes\mathbb{C}^{d_n}$, which maps an orthonormal basis $\{\ket{i}\}$ of $\mathbb{C}^K$ to some orthonormal set $\{\ket{\psi}_i\}$ spanning $W$:
\begin{equation}\label{Isoi}
    V\ket{i} = \ket{\psi_i}.
\end{equation}
Hence, the range of the isometry $V$ coincides with the subspace $W$. As before, let us choose an $r$-element subset $S$ of the set of $n$ parties. Applying the isometry to a state in $\mathbb{C}^K$  with subsequent tracing out $n-r$ subsystems in $S^c$ results in action of a quantum channel on the state:  
\begin{equation}\label{ChanV}
    \mathrm{Tr}_{S^c}[V\dyad{\phi}V^{\dagger}] = \Phi_S(\dyad{\phi}),\quad\ket{\phi}\in\mathbb{C}^K.
\end{equation}
We thus obtain a family of quantum channels $\Phi_S\colon L(\mathbb{C}^K)\rightarrow L(\mathbb{C}^{d_{i_1}}\otimes\ldots\otimes\mathbb{C}^{d_{i_r}})$, where $\{i_1,\,\ldots,\,i_r\}=S$; one channel for each choice of $S$. Since subspace $W$ is $r$-uniform, a channel $\Phi_S$ maps all states of $\mathbb{C}^K$  to the identity on $S$:
\begin{equation}
    \Phi_S(\dyad{\phi}) = \frac1{ d_{i_1}\cdot\ldots\cdot d_{i_r}}\,I_{i_1,\,\ldots,\,i_r},\quad\ket{\phi}\in\mathbb{C}^K.
\end{equation}
In other words, the channels  discard the input and map everything to the maximally mixed state:
\begin{equation}\label{TrChan}
    \Phi_S(X) = \frac{\mathrm{Tr}[X]}{ d_{i_1}\cdot\ldots\cdot d_{i_r}}\,I_S,\quad X\in L(\mathbb{C}^K).
\end{equation}
By setting $X=\dyad{i}{j}$ in this expression, with the use of Eqs.~(\ref{Isoi}), (\ref{ChanV}), we recover Eq.~(\ref{TrOrth}).

It should be stressed that the above property of mapping to the maximally mixed state holds when the input dimension of the isometry~(and of the corresponding channel) is not greater than the dimension of the range of the isometry, i.~e., the dimension of the $r$-uniform subspace. In fact, the input dimension can be strictly less than that dimension, in which case the isometry takes the input states to some subspace of the $r$-uniform space.

Diagrammatic representation of Eqs.~(\ref{ChanV}) and (\ref{TrChan}) is shown on Fig.~\ref{fig:rIso}, where the symbol "$\sim$" means that the two diagrams are equal up to a scalar factor~(the normalization constant for the maximally mixed state). This construction will be crucial in further considerations.

Now we can choose $V$ to be an encoding isometry of some pure quantum error correcting code. Let us try to combine this isometry with some states.

\subsubsection*{Example: 2-uniform state in heterogeneous systems}

Consider the $((5,\, 3,\, 3))_3$ pure code~\cite{Rains99} and its encoding isometry $V$. Application of $V$ to one of the parties of a bipartite pure state $\ket{\psi}$ in $\mathbb C^2\otimes\mathbb C^3$ yields a $6$-partite pure state in $\C^2\otimes\ts{\C^3}{5}$, as shown on Fig.~\ref{fig:23st2}.  The code has distance $3$, so the code subspace is $2$-uniform. In addition, the code subspace has dimension equal to 3, which matches the local dimension of the second party of the state $\ket{\psi}$. Therefore, the property of Fig.~\ref{fig:rIso} holds in this case  with $r$ = 2.

\begin{figure}[t]
\includegraphics[scale=0.3]{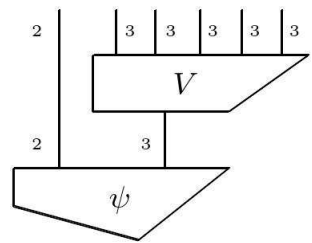}
\caption{\label{fig:23st2} Construction of a $2$-uniform state from an encoding isometry $V$ of the $((5,\, 3,\, 3))_3$ pure code and a maximally entangled state $\ket{\psi}$ in $\C^2\otimes\C^3$.}
\end{figure}

If the bipartite state $\ket{\psi}$ is \emph{maximally entangled}, i. e. its reduction to the party with local dimension $2$ is maximally mixed, then the resulting state in $\C^2\otimes\ts{\C^3}{5}$ will be $2$-uniform. This can be shown diagrammatically. One needs to consider the two cases of producing the two-party reduction of the state in question: a) all parties are traced out except some two output subsystems of $V$; b) all parties are traced out except the first party of $\ket{\psi}$~(with dimension $2$) and some output subsystem of $V$.

\begin{figure}[b]
\includegraphics[scale=0.3]{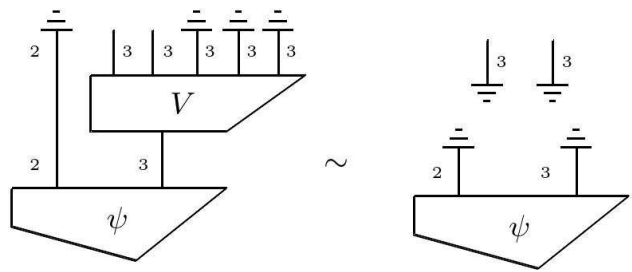}
\caption{\label{fig:2unitra} The state $\ket{\psi}$ is completely traced out -  the part of the diagram on the bottom right  is a scalar equal to $1$.}
\end{figure}

The case a) is presented on Fig.~\ref{fig:2unitra}: the property on Fig.~\ref{fig:rIso} being used, the state $\ket{\psi}$ gets completely traced out and the resulting state is proportional to $I_3\otimes I_3$.

The case b) is analyzed on Fig.~\ref{fig:2unitrb}: on the first step the property on Fig.~\ref{fig:rIso} is used; the second step is due to the fact that  $\ket{\psi}$ is maximally entangled.

It is interesting to note that $\C^2\otimes\ts{\C^3}{5}$ was the smallest possible Hilbert space for which a $2$-uniform state could be constructed with the methods of Ref.~\cite{GBZ16}.

\subsection{\label{sec:res:build} Construction of $r$-uniform subspaces in heterogeneous systems}

The simplest method to produce an $r$-uniform subspace in heterogeneous systems is to "glue" together two $r$-uniform subspaces in homogeneous systems. By "glueing" we mean taking tensor product of the two subspaces: this can be done by taking  all possible  tensor products of the vectors spanning the two subspaces, the resulting subspace will be spanned by such combinations.

\begin{figure}[t]
\includegraphics[scale=0.3]{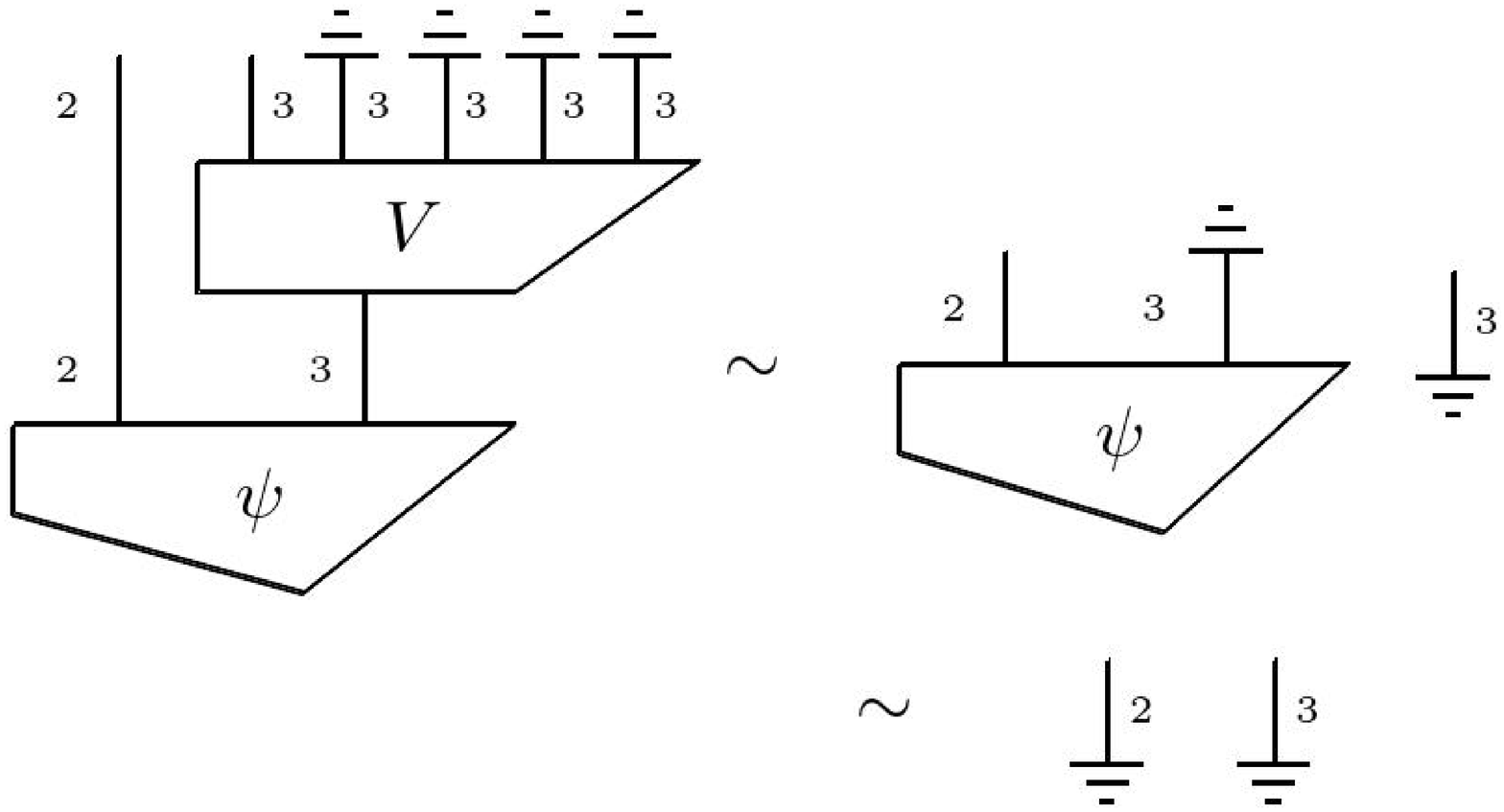}
\caption{\label{fig:2unitrb} By the property on Fig.~\ref{fig:rIso} and   the maximal entanglement of $\ket{\psi}$ the resulting state is proportional to $I_2\otimes I_3$.}
\end{figure}

\begin{lemma}\label{glue}
    Tensor product of an $r$-uniform subspace and a $k$-uniform subspace is an $l$-uniform subspace, where $l=\min(r,\,k)$.
\end{lemma}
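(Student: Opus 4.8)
The plan is to verify the defining property directly on an arbitrary vector of the tensor-product subspace, reducing everything to Lemma~\ref{rlem} applied separately to each factor. Let $W_1$ denote the $r$-uniform subspace on a set of parties $A$ and $W_2$ the $k$-uniform subspace on a disjoint set $B$, with orthonormal bases $\{\ket{\phi_i}\}$ and $\{\ket{\chi_j}\}$; then $\{\ket{\phi_i}\otimes\ket{\chi_j}\}$ is an orthonormal basis of $W_1\otimes W_2$. I would take a general normalized vector $\ket{\Psi}=\sum_{ij}c_{ij}\ket{\phi_i}\otimes\ket{\chi_j}$, fix an arbitrary $l$-element subset $S$ of $A\cup B$, and split it as $S=S_A\cup S_B$ with $S_A=S\cap A$ and $S_B=S\cap B$.

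The key observation, and the reason the minimum appears, is that $|S_A|+|S_B|=l=\min(r,k)$ forces $|S_A|\le r$ and $|S_B|\le k$ however the $l$ parties happen to be distributed between the two systems. Since $r$-uniformity implies uniformity to any smaller number of parties, Lemma~\ref{rlem} applies to $W_1$ for the subset $S_A$ and to $W_2$ for $S_B$, giving $\mathrm{Tr}_{A\setminus S_A}[\dyad{\phi_i}{\phi_{i'}}]\sim\delta_{ii'}I_{S_A}$ and $\mathrm{Tr}_{B\setminus S_B}[\dyad{\chi_j}{\chi_{j'}}]\sim\delta_{jj'}I_{S_B}$. Because the $A$- and $B$-factors occupy disjoint tensor slots, the partial trace over $S^c$ factorizes into a trace over $A\setminus S_A$ tensored with one over $B\setminus S_B$; substituting the two relations collapses the double sum, as the Kronecker deltas annihilate every cross term while the surviving diagonal terms carry the total weight $\sum_{ij}|c_{ij}|^2=1$. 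One is then left with $I_{S_A}\otimes I_{S_B}=I_S$ divided by $\prod_{s\in S}d_s$---precisely the maximally mixed state on $S$. As $S$ and $\ket{\Psi}$ were arbitrary, this establishes $l$-uniformity of $W_1\otimes W_2$.

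The main point to guard against is the temptation to treat a generic element of $W_1\otimes W_2$ as a single product $\ket{\phi}\otimes\ket{\chi}$; it is not, so the off-diagonal vanishing furnished by Lemma~\ref{rlem} is genuinely needed to dispose of the cross terms, and this is where the bulk of the bookkeeping lives. I would also check the boundary cases $S_A=\varnothing$ or $S_B=\varnothing$, where the corresponding factor is traced out entirely and the relevant relation reduces to $\delta_{jj'}$ (respectively $\delta_{ii'}$) by orthonormality, consistent with the empty-product normalization. Diagrammatically the whole argument reduces to the statement that the two ``discard-and-return-maximally-mixed'' boxes of Fig.~\ref{fig:rIso}, one per factor, act on the two halves of the diagram without interacting.
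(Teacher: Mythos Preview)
Your proof is correct and follows essentially the same approach as the paper's. The paper packages the argument in terms of isometries $V_1,\,V_2$ whose ranges are $W_1,\,W_2$ and invokes the channel property of Fig.~\ref{fig:rIso} on each factor, whereas you work directly with an orthonormal basis expansion and apply Lemma~\ref{rlem} to each tensor factor; these are equivalent formulations of the same idea (as you yourself note in your closing diagrammatic remark), with your version making the handling of cross terms and the boundary cases $S_A=\varnothing$, $S_B=\varnothing$ more explicit.
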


\begin{proof}
    Let $W_1$ be an $r$-uniform subspace with $n$ parties  and let $W_2$ be a $k$-uniform subspace with $m$ parties. Consider two isometries $V_1\colon\,\hil_{A}\rightarrow\hil_{C_1}\otimes\cdots\otimes\hil_{C_n}$ and $V_2\colon\,\hil_{B}\rightarrow\hil_{D_1}\otimes\cdots\otimes\hil_{D_m}$, where $\dim(\hil_A)=\dim(W_1)$ and $\dim(\hil_B)=\dim(W_2)$. The first one, $V_1$, maps the basis states $\{\ket{i}\}_A$ of $\hil_A$ to an orthonormal system of vectors spanning $W_1$. Similarly, $V_2$ maps the basis states $\{\ket{j}\}_B$ of $\hil_B$ to vectors spanning $W_2$. Tensor product $W_1\otimes W_2$ is then spanned by the vectors 
    \begin{equation}
        \left(V_1\otimes V_2\right)\left(\ket{i}_{ A}\otimes\ket{j}_{B}\right),
    \end{equation}
    as shown on Fig.~\ref{fig:2isouni}. Now the property from Fig.~\ref{fig:rIso} can be applied when one traces out \emph{any} $n+m-l$ of the parties $C_1,\,\ldots,\,C_n,\,D_1,\,\ldots,\,D_m$. As a result, a general state $\ket{\phi}\in\hil_A\otimes\hil_B$ gets completely traced out, and the $l$-party maximally mixed state is produced.
\end{proof}

$R$-uniform subspaces can be used for $r$-uniform quantum information masking~\cite{SLCZ21}. An operation $V$ is said to $r$-uniformly mask quantum information contained in states $\{\ket{i}\}$ if it maps them to multipartite states $\{\ket{\psi_i}\}$ whose all reductions to $r$ parties are identical. In the proof of Lemma~\ref{glue}  an instance of masking has been provided: on the right part of Fig.~\ref{fig:2isouni} it is shown how each state  $\phi$ from $\hil_A\otimes\hil_B$ is ``masked'' by the two isometries $V_1$ and $V_2$ as an $l$-uniform state.

As an example, combining encoding isometries of $((5,\,2,\,3))_2$ and $((5,\,3,\,3))_3$ pure codes, by Lemma~\ref{glue} we obtain a $6$-dimensional $2$-uniform subspace of the $\ts{\C^2}{5}\otimes\ts{\C^3}{5}$ Hilbert space.

Can we reduce the number of parties? A structure similar to the one on Fig.~\ref{fig:23st2} can be used. Let us take the encoding isometry $V$ of the $[[6,\,2,\,3]]_3$ (stabilizer) pure code~(Ref.~\cite{JLLX10}, Corollary 3.6). The range of the isometry is a  $2$-uniform subspace of the $\ts{\C^3}{6}$ Hilbert space, which has dimension equal to $3^2=9$. Consider a subspace of $\C^2\otimes\C^9$, which consists entirely of states maximally entangled with respect to the first party. Such a subspace can be easily constructed with the use  of Proposition 3 of Ref.~\cite{GW07}. The dimension of the subspace is equal to $\floor{\frac92}=4$~(Corollary 4 of Ref.~\cite{GW07}). Now let us act with $V$ on the second party~(the one with dimension $9$) of each state in the subspace. This procedure will generate a $4$-dimensional subspace of the $\C^2\otimes\ts{\C^3}{6}$ Hilbert space. The analysis, which is similar to that on Figs.~\ref{fig:2unitra} and~\ref{fig:2unitrb}, shows that the resulting subspace is $2$-uniform.

\begin{figure}[t]
\includegraphics[scale=0.35]{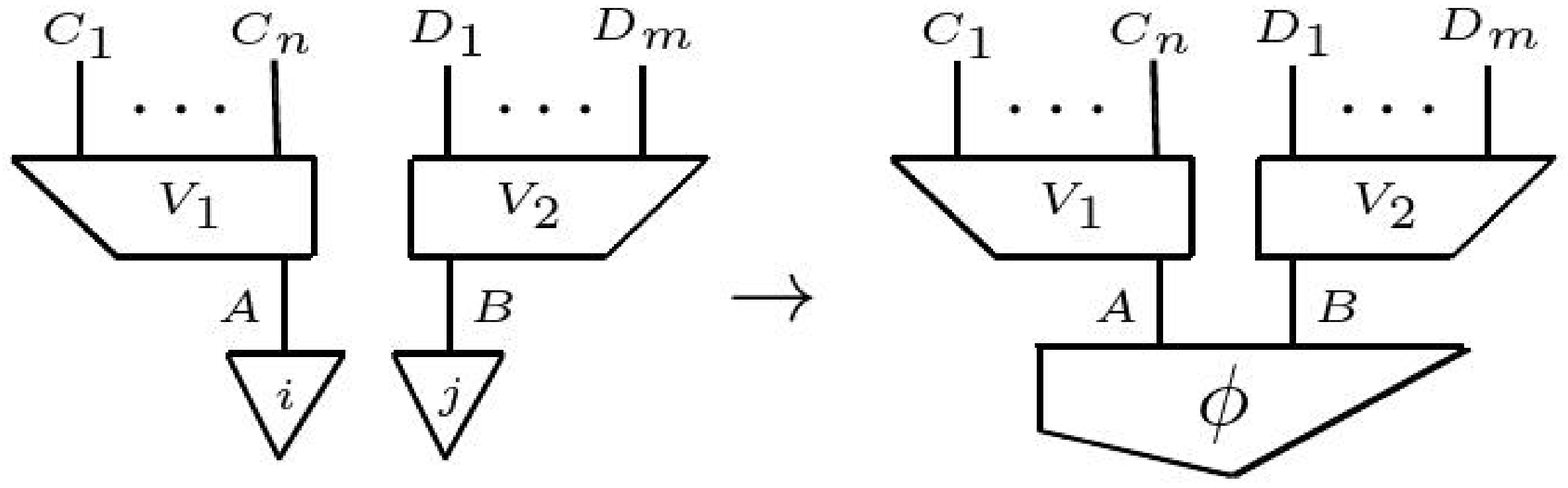}
\caption{\label{fig:2isouni} On the left: action of the isometry $V_1\otimes V_2$ on a particular basis state $\ket{i}_{A}\otimes\ket{j}_{B}$ of $\hil_A\otimes\hil_B$. On the right: action of the isometry $V_1\otimes V_2$ on a general  state $\phi$ from $\hil_A\otimes\hil_B$, which is equal to a linear combination  of basis states $\{\ket{i}_{A}\otimes\ket{j}_{B}\}$. Action of   $V_1\otimes V_2$ on each state in $\hil_A\otimes\hil_B$ generates $W_1\otimes W_2$, which is $l$-uniform.}
\end{figure}

The above construction can be viewed as an illustration of \emph{subspace masking}: only states that belong to a specific subspace of $\C^2\otimes\C^9$ are masked by $V$ as $2$-uniform states in $\C^2\otimes\ts{\C^3}{6}$.

The next property provides an important way of generating $r$-uniform subspaces from those of larger number of parties. It can be seen as the extension of Theorem~20 of Ref.~\cite{Rains98} to the case of heterogeneous systems.
\begin{theorem}\label{TrGen}
    Let $W$ be an $r$-uniform subspace of Hilbert space with the set $S=\{d_1,\,\ldots,\,d_n\}$ of local dimensions. Let $r\geqslant1$ and $\dim(W)=K$. Then, for any $d_i\in S$,  there exists an $r-1$-uniform subspace of Hilbert space with local dimensions $S\setminus \{d_i\}$. The dimension of this subspace is equal to $d_i K$.
\end{theorem}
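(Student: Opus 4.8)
The plan is to build the reduced subspace explicitly, not by tracing party $i$ out (which destroys subspace structure) but by contracting its wire against a fixed basis. Let $\{\ket{\psi_j}\}_{j=1}^{K}$ be an orthonormal set spanning $W$ and let $\{\ket{a}_i\}_{a=0}^{d_i-1}$ be an orthonormal basis of the factor $\mathbb{C}^{d_i}$. I would define the $d_iK$ vectors
\[
\ket{\phi_{j,a}} := \bra{a}_i\ket{\psi_j},
\]
each living in the Hilbert space with local dimensions $S\setminus\{d_i\}$, and take the candidate subspace $W'$ to be their linear span. Diagrammatically this is just bending the $i$-th output wire of $\psi_j$ and capping it with the effect $\bra{a}$.

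First I would show that, after rescaling by $\sqrt{d_i}$, these vectors form an orthonormal system, which pins $\dim W'=d_iK$. Here $\braket{\phi_{j,a}}{\phi_{k,b}}=\bra{\psi_j}(\dyad{a}{b})_i\ket{\psi_k}$, and writing this as $\mathrm{Tr}_i[(\dyad{a}{b})_i\,\mathrm{Tr}_{\{i\}^c}[\dyad{\psi_k}{\psi_j}]]$ lets me invoke Lemma~\ref{rlem} on the one-element subset $\{i\}$ (legitimate precisely because $r\geqslant1$), which collapses the inner reduction to $\tfrac1{d_i}\delta_{jk}I_i$; tracing against $\dyad{a}{b}$ then gives $\tfrac1{d_i}\delta_{jk}\delta_{ab}$. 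Orthonormality yields linear independence, hence the stated dimension $d_iK$.

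Next I would verify that $W'$ is $(r-1)$-uniform. The key observation is that for any $(r-1)$-element subset $T$ of the remaining parties, the set $T\cup\{i\}$ is an $r$-element subset of the original $n$ parties, and its complement coincides with the complement $T^c$ of $T$ inside the reduced system. Using $\dyad{\phi_{j,a}}{\phi_{k,b}}=\mathrm{Tr}_i[(\dyad{b}{a})_i\,\dyad{\psi_j}{\psi_k}]$ and commuting $\mathrm{Tr}_{T^c}$ (which acts on parties disjoint from $i$) past the contraction on $i$, the reduction of the cross term becomes
\[
\mathrm{Tr}_{T^c}[\dyad{\phi_{j,a}}{\phi_{k,b}}]=\mathrm{Tr}_i\!\left[(\dyad{b}{a})_i\,\mathrm{Tr}_{(T\cup\{i\})^c}[\dyad{\psi_j}{\psi_k}]\right].
\]
Now Lemma~\ref{rlem} for $W$ applied to the $r$-element subset $T\cup\{i\}$ turns $\mathrm{Tr}_{(T\cup\{i\})^c}[\dyad{\psi_j}{\psi_k}]$ into $\delta_{jk}I_{T\cup\{i\}}\big/\!\prod_{l\in T\cup\{i\}}d_l$, and taking the remaining partial trace on party $i$ against $\dyad{b}{a}$ leaves $\delta_{jk}\delta_{ab}\,I_T\big/\!\prod_{l\in T}d_l$. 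Thus the off-diagonal terms vanish and each diagonal term is maximally mixed on $T$; by the same reasoning used to establish Lemma~\ref{rlem}, every normalized vector of $W'$ is then $(r-1)$-uniform.

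I expect the genuine content to be the construction itself — recognizing that capping one wire with a basis effect, rather than discarding it, is what preserves a subspace while lowering the uniformity order by one and multiplying the dimension by $d_i$; the rest is bookkeeping of partial traces. The one step meriting care is the interchange of $\mathrm{Tr}_{T^c}$ with the contraction on party $i$ together with the identification of $T\cup\{i\}$ as an $r$-element subset, since this is exactly where both the drop from $r$ to $r-1$ and the gain of the factor $d_i$ originate, and it is also the place where the argument generalizes Theorem~20 of Ref.~\cite{Rains98} to unequal local dimensions.
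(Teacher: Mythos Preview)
Your proof is correct and matches in spirit what the paper does, though you take the more direct of two routes the paper presents. The paper's formal proof goes via the Schmidt decomposition $\ket{\psi_k}=\tfrac{1}{\sqrt{d_i}}\sum_j\ket{\phi^{(k)}_j}\otimes\ket{\chi^{(k)}_j}$ with \emph{state-dependent} local bases $\{\ket{\chi^{(k)}_j}\}$, then contracts against those Schmidt partners to extract the $\ket{\phi^{(k)}_j}$ and verify Eq.~(\ref{TrSub}); only afterwards, as a practical remark, does it observe that contracting against any \emph{fixed} orthonormal basis $\{\ket{v_j}\}$ of $\mathbb{C}^{d_i}$ yields an orthonormal system spanning the same subspace. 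You have effectively promoted that remark to the main argument, which is cleaner: you avoid the $k$-dependence of the Schmidt basis and get orthonormality and $(r-1)$-uniformity from two applications of Lemma~\ref{rlem} (on $\{i\}$ and on $T\cup\{i\}$) with no intermediate decomposition. One minor slip: your displayed reduction $\mathrm{Tr}_{T^c}[\dyad{\phi_{j,a}}{\phi_{k,b}}]$ carries an extra factor $1/d_i$ relative to what you wrote, since the $\ket{\phi_{j,a}}$ are not yet normalized; this is harmless for the conclusion but worth tracking. The Schmidt route buys an explicit identification of the spanning vectors as Schmidt partners (and makes their normalization automatic), while your fixed-basis route is shorter and makes the projector formula $d_i\,\mathrm{Tr}_{\{i\}}\bigl[\sum_s\dyad{\psi_s}\bigr]$ immediate.
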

\begin{proof}
    Consider an orthonormal set of vectors $\{\ket{\psi_k}\}$ which span $W$. Each vector $\ket{\psi_k}$ is  $r$-uniform, and so, in particular, its reduction to  party $i$ is proportional to the maximally mixed operator $I_{\{i\}}$. Accordingly, the Schmidt decomposition of $\ket{\psi_k}$ with respect to the bipartition ''party i$|$the rest'' reads
    \begin{equation}\label{SchU}
        \ket{\psi_k} = \frac1{\sqrt{d_i}}\sum_{j=0}^{d_i-1}\,\ket{\phi^{(k)}_j}_{\overline{P}}\otimes\ket{\chi^{(k)}_j}_P,
    \end{equation}
    where $\{\ket{\phi^{(k)}_j}_{\overline{P}}\}$ and $\{\ket{\chi^{(k)}_j}_P\}$, with $j=0,\,\ldots,\,d_i-1$ and $k$ fixed, are two orthonormal sets of vectors in $r-1$-partite and $1$-partite Hilbert spaces with local parties $\overline{P}=\{1,\,\ldots,\,n\}\setminus\{i\}$ and $P=\{i\}$, respectively. In the right part of Eq.~(\ref{SchU}) vectors with the \emph{same} upper index satisfy the orthonormality condition, for example,
    \begin{equation}\label{OrthC}       \bra{\chi^{(k)}_m}\ket{\chi^{(k)}_n}_P=\delta_{mn}.
    \end{equation}
    
    Now consider an $r-1$-element subset  $J\subset\overline{P}$ and, for some numbers $k,\,s\in\{1,\,2,\,\ldots,\,K\}$, take the reduction of $\dyad{\psi_k}{\psi_s}$ to the set $J\cup\{i\}$:
    \begin{multline}
        \mathrm{Tr}_{\overline{P}\setminus J}\left[\dyad{\psi_k}{\psi_s}\right]\\ = \frac1{d_i}\sum_{j,\,l=0}^{d_i-1}\,\mathrm{Tr}_{\overline{P}\setminus J}\left[\dyad{\phi^{(k)}_j}{\phi^{(s)}_l}_{\overline{P}}\right]
        \otimes\dyad{\chi^{(k)}_j}{\chi^{(s)}_l}_P.
     \end{multline}
By Lemma~\ref{rlem}, this reduction is proportional to $\delta_{ks}\,I_{J\cup\{i\}}=\delta_{ks}\,I_J\otimes I_{\{i\}}$, and hence
\begin{multline}
    \frac1{d_i}\sum_{j,\,l=0}^{d_i-1}\,\mathrm{Tr}_{\overline{P}\setminus J}\left[\dyad{\phi^{(k)}_j}{\phi^{(s)}_l}_{\overline{P}}\right]
        \otimes\dyad{\chi^{(k)}_j}{\chi^{(s)}_l}_P \\
        =\delta_{ks}\,\frac1{d_J}\,I_J\otimes\,\frac1{d_i}\,I_{\{i\}},
\end{multline}
where $d_J$ is the product of local dimensions of the parties in $J$. Multiplying both parts of this equality by $\bra{\chi^{(k)}_m}$ and $\ket{\chi^{(s)}_n}$, with the use of condition~(\ref{OrthC}), we obtain
\begin{equation}\label{TrSub}
    \mathrm{Tr}_{\overline{P}\setminus J}\left[\dyad{\phi^{(k)}_m}{\phi^{(s)}_n}_{\overline{P}}\right] = \delta_{ks}\delta_{mn}\,\frac1{d_J}\,I_J.
\end{equation}
Taking trace over subsystem $J$ in the last equation, one can see that the set of $d_iK$ vectors $\{\ket{\phi^{(t)}_j}_{\overline{P}}\}$,  $j=0,\,\ldots,\,d_i-1$, $t=1,\,\ldots,\,K$ is an orthonormal system.
Since Eq.~(\ref{TrSub}) holds for any choice of an $r-1$-element subset $J\subset\overline{P}$, the states $\{\ket{\phi^{(t)}_j}_{\overline{P}}\}$ are $r-1$-uniform. In addition, from the same equation one can see that any linear combination of these vectors  is an $r-1$-uniform state. Therefore, the system  $\{\ket{\phi^{(t)}_j}_{\overline{P}}\}$ spans a $d_iK$-dimensional $r-1$-uniform subspace. From Eq.~(\ref{SchU}) it follows that 
\begin{equation}\label{ProjSub}
    d_i\mathrm{Tr}_{\{i\}}\left[\sum_{s=1}^K\,\dyad{\psi_s}\right]
\end{equation}
is   the orthogonal projector on the subspace in question.
\end{proof}

A practical way to obtain an orthonormal system of vectors spanning the subspace defined by the projector in Eq.~(\ref{ProjSub}) is as follows. Let us suppose that party $i$ with local dimension $d_i$ is being eliminated, just as in the condition of Theorem \ref{TrGen}. Consider an orthonormal system of one party vectors $\{\ket{v_j}\},\,j=0,\,\ldots,\,d_i-1$, which spans $\C^{d_i}$ and such that each partial scalar product  
\begin{multline}   \ket{\mu_j^{(s)}}\equiv\bra{v_j}\ket{\psi_s},\qquad j=0,\,\ldots,\,d_i-1,\\s=1,\,\ldots,\,K,
\end{multline}
is a non-null vector, where the only input of the  (co)vector $\bra{v_j}$ is joined with the $i$-th output of the vector $\ket{\psi_s}$ in each partial scalar product. Then $d_iK$ vectors $\{\ket{\mu_j^{(s)}}\}$ span the subspace in question. Indeed, the vectors are mutually orthogonal:
\begin{multline}
    \bra{\mu_l^{(t)}}\ket{\mu_j^{(s)}}=\bra{\psi_t}\ket{v_l}\bra{v_j}\ket{\psi_s}\\
    =\bra{v_j}\mathrm{Tr}_{\overline{P}}\left[\dyad{\psi_s}{\psi_t}\right]\ket{v_l}=\frac1{d_i}\delta_{ts}\bra{v_j}\ket{v_l}\\
    =\frac1{d_i}\delta_{ts}\delta_{lj},
\end{multline}
where the third equality follows from $r$-uniformity of the original vectors $\{\ket{\psi_s}\}$ and Lemma~\ref{rlem}. Consequently, $\{\sqrt{d_i}\ket{\mu_j^{(s)}}\}$ is an orthonormal system, and the corresponding projector 
\begin{multline}
    \sum_{j,\,s}\,\sqrt{d_i}\dyad{\mu^{(s)}_j}\sqrt{d_i}=d_i\sum_{j,\,s}\bra{v_j}\ket{\psi_s}\bra{\psi_s}\ket{v_j}\\
    =d_i\mathrm{Tr}_{\{i\}}\left[\sum_{s}\,\dyad{\psi_s}\right]
\end{multline}
coincides with the one in Eq.~(\ref{ProjSub}).

As an example, from the obtained above $4$-dimensional $2$-uniform subspace of $\C^2\otimes\ts{\C^3}{6}$ Hilbert space one can produce a $12$-dimensional $1$-uniform subspace of $\C^2\otimes\ts{\C^3}{5}$ Hilbert space  by eliminating one of the parties with local dimension $3$.

When an initial $r$-uniform subspace is spanned just by $1$ vector, the described above practical method becomes similar to Proposition 12 of Ref.~\cite{SSCZ22}.

As we have seen from Lemma~\ref{glue}, glueing two uniform subspaces together doesn't increase the uniformity parameter of the resulting subspace. This is in accordance with the general principle that local operations cannot produce any entanglement over that present in original states. Let us show that making use of additional resources such as maximally entangled states can  lead to larger uniformity parameters of the produced states and subspaces in comparison with original ones. At first we  consider uniform subspaces in homogeneous systems, i.~e., those corresponding to pure quantum error correcting codes.

Recall that the parameters of a $((n,\,K,\,d))_D$ code  satisfy the inequality~\cite{CeCl97, Rains99}
\begin{equation}\label{Sing}
    K\leqslant D^{n-2(d-1)},
\end{equation}
which is called the \emph{quantum Singleton bound}. If parameters of a code saturate the bound in Eq.~(\ref{Sing}), the code is called \emph{quantum maximum distance separable code}~(QMDS)~\cite{Rains99}. It is known that all QMDS codes are pure~\cite{Rains99}.

In Ref.~\cite{HuGra20} an important observation about    QMDS codes was made.  We reformulate it here in a more general form and provide the proof.
\begin{lemma}[An observation in the proof of Proposition 7 of Ref.~\cite{HuGra20}]\label{TrQMDS}

    Let $((n,\,K,\,d))$ be a QMDS code.
    Consider the projector $P=\sum_{s=1}^K\,\dyad{\psi_s}$ on the codespace, where $\{\ket{\psi_s}\}$ is an orthonormal set of vectors that span the codespace. Then each reduction of $P$ to $n-(d-1)$ parties is proportional to the maximally mixed operator $I_{n-(d-1)}$.
\end{lemma}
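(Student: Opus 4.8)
The plan is to realize the code projector as a partial trace of a single bipartite pure state and then invoke the isospectrality of complementary reductions. Write $r=d-1$, so that the codespace is $r$-uniform (recall QMDS codes are pure) and $K=D^{\,n-2r}$ by saturation of the quantum Singleton bound, Eq.~\eqref{Sing}. I would introduce a reference system $\mathcal R\cong\mathbb C^{K}$ and form the normalized maximally entangled vector $\ket{\Omega}=K^{-1/2}\sum_{s=1}^{K}\ket{s}_{\mathcal R}\otimes\ket{\psi_s}$, whose reduction to the $n$ physical parties, $\mathrm{Tr}_{\mathcal R}[\dyad{\Omega}]$, is exactly $P/K$. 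Fix a target set $T$ of $|T|=n-(d-1)=n-r$ parties, so that its complement $T^{c}$ contains exactly $r$ parties.

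First I would compute the reduction of $\ket{\Omega}$ to $\mathcal R\cup T^{c}$, i.e. trace out the parties in $T$. Since $T^{c}$ has exactly $r$ elements, Lemma~\ref{rlem} gives $\mathrm{Tr}_{T}[\dyad{\psi_s}{\psi_t}]=(\delta_{st}/D^{r})\,I_{T^{c}}$, so this reduction collapses to $\tfrac1K I_{\mathcal R}\otimes\tfrac1{D^{r}}I_{T^{c}}$, the maximally mixed operator on $\mathcal R\cup T^{c}$, a space of dimension $K D^{r}=D^{\,n-r}$.

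The decisive step is a dimension count: the bipartition $T\,|\,(\mathcal R\cup T^{c})$ of the pure vector $\ket{\Omega}$ is balanced, because the $T$ side carries dimension $D^{\,n-r}$ and, precisely by QMDS saturation $K=D^{\,n-2r}$, the other side carries dimension $K D^{r}=D^{\,n-r}$ as well. Since complementary reductions of a pure state are isospectral and the $\mathcal R\cup T^{c}$ side is full-rank maximally mixed, the $T$-side reduction $\mathrm{Tr}_{\mathcal R,\,T^{c}}[\dyad{\Omega}]$ must also have all $D^{\,n-r}$ eigenvalues equal, hence equal $\tfrac1{D^{\,n-r}}I_{T}$. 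Translating back via $\mathrm{Tr}_{\mathcal R,\,T^{c}}[\dyad{\Omega}]=\mathrm{Tr}_{T^{c}}[P]/K$, I obtain $\mathrm{Tr}_{T^{c}}[P]=(K/D^{\,n-r})\,I_{T}=D^{-r}I_{T}\propto I_{T}$, which is the claim.

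I expect the only real subtlety to be the recognition that QMDS saturation is exactly what makes this bipartition balanced: this is the single place the hypothesis $K=D^{\,n-2(d-1)}$ enters, and it is what upgrades "isospectral" into "maximally mixed on the larger side." A more pedestrian route would expand $\mathrm{Tr}_{T^{c}}[P]$ in Weyl operators $W$ supported on $T$ and try to kill each coefficient $\mathrm{Tr}[P(W^{\dagger}\otimes I_{T^{c}})]$; purity disposes of the terms of weight $\le d-1$ at once, but the argument stalls on the terms of weight $\ge d$, which genuinely occur here since $|T|=n-r$ may exceed $r$. The purification argument is attractive precisely because it sidesteps these high-weight terms altogether.
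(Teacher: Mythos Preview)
Your argument is correct. The purification $\ket{\Omega}$ is exactly the Choi vector of the encoding isometry, and the isospectrality of complementary reductions together with the QMDS dimension count $KD^{r}=D^{\,n-r}$ indeed forces the $T$-side marginal to be maximally mixed. The final normalization $\mathrm{Tr}_{T^{c}}[P]=D^{-r}I_{T}$ is right.

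The paper proves the lemma by a different, inductive route: it simply applies Theorem~\ref{TrGen} (equivalently, Rains's Theorem~20) $d-1$ times. Each partial trace over one party turns a $K$-dimensional $r$-uniform projector into a $KD$-dimensional $(r{-}1)$-uniform projector; after $r=d-1$ steps one is left with a projector of rank $KD^{\,d-1}=D^{\,n-(d-1)}$ on $\ts{\C^{D}}{n-(d-1)}$, which is therefore the identity. So the paper leverages machinery it has just built (Theorem~\ref{TrGen}) and arrives at the result as an immediate corollary, whereas you bypass Theorem~\ref{TrGen} entirely with a one-shot Schmidt argument. Both proofs hinge on the same arithmetic identity $KD^{r}=D^{\,n-r}$, but they read it differently: the paper reads it as ``the iterated projector fills the whole space,'' while you read it as ``the bipartition $T\,|\,\mathcal R\cup T^{c}$ is balanced.'' Your approach is self-contained and arguably more transparent about where the QMDS hypothesis enters; the paper's approach has the virtue of exhibiting the intermediate $(r{-}j)$-uniform subspaces explicitly, which is useful elsewhere in the paper.
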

\begin{proof}
According to Theorem~20 of Ref.~\cite{Rains98} (or Theorem~\ref{TrGen} here), tracing out one party yields a projector on a $KD$-dimensional subspace of $\ts{\C^D}{n-1}$, hence after $d-1$ such steps of tracing out we have a projector on a subspace of $\ts{\C^D}{(n-(d-1))}$ with dimension $KD^{d-1}=D^{n-(d-1)}$, i.~e. the projector on the whole space $\ts{\C^D}{(n-(d-1))}$, the identity operator.
 \end{proof}
 We stress that this  holds \emph{only} for QMDS codes - those with $K=D^{n-2(d-1)}$, and not for other pure codes. 

With the use of QMDS codes we can now formulate the following property. 
\begin{theorem}\label{codeampl}
    Let $((n_1,\, K_1,\,d_1))_{D_1}$ and $((n_2,\, K_2,\,d_2))_{D_2}$ be two QMDS codes with $K_1=K_2\equiv K>1$. Denote $r_1\equiv d_1-1$ and $r_2\equiv d_2-1$. Then there exists an $l$-uniform state in  $\ts{\C^{D_1}}{n_1}\otimes\ts{\C^{D_2}}{n_2}$ Hilbert space with 
    \begin{equation}\label{lUni}
        l=\min\left(n_1-r_1,\,n_2-r_2,\,r_1+r_2+1\right).
    \end{equation}
\end{theorem}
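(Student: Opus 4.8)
The plan is to glue the two codes through a maximally entangled link between their encoding inputs and then verify uniformity one cut at a time. Let $V_1\colon\C^K\to\ts{\C^{D_1}}{n_1}$ and $V_2\colon\C^K\to\ts{\C^{D_2}}{n_2}$ be encoding isometries of the two QMDS codes, with $V_1\ket{i}=\ket{\psi_i^{(1)}}$ and $V_2\ket{i}=\ket{\psi_i^{(2)}}$ spanning the respective $r_1$- and $r_2$-uniform codespaces. Feeding the maximally entangled state $\ket{\Omega}=\frac1{\sqrt K}\sum_{i=0}^{K-1}\ket{i}\otimes\ket{i}$ of $\C^K\otimes\C^K$ into $V_1\otimes V_2$ produces the candidate
\begin{equation}
    \ket{\Psi}=(V_1\otimes V_2)\ket{\Omega}=\frac1{\sqrt K}\sum_{i=0}^{K-1}\ket{\psi_i^{(1)}}\otimes\ket{\psi_i^{(2)}},
\end{equation}
a normalized state of $\ts{\C^{D_1}}{n_1}\otimes\ts{\C^{D_2}}{n_2}$. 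Diagrammatically this is the construction of Fig.~\ref{fig:23st2} with a code attached to each output leg of the entangled link rather than a bare party.

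To establish $l$-uniformity I would take an arbitrary $l$-element set $T$ of parties and split it as $T=T_A\cup T_B$, where $T_A$ consists of $a$ parties of the first code and $T_B$ of $b$ parties of the second, $a+b=l$. Writing $\overline{T_A},\,\overline{T_B}$ for the discarded parties of the two codes and using that the partial trace factorizes across the tensor product,
\begin{equation}
    \mathrm{Tr}_{T^c}[\dyad{\Psi}]=\frac1K\sum_{i,j}\mathrm{Tr}_{\overline{T_A}}[\dyad{\psi_i^{(1)}}{\psi_j^{(1)}}]\otimes\mathrm{Tr}_{\overline{T_B}}[\dyad{\psi_i^{(2)}}{\psi_j^{(2)}}].
\end{equation}
The key is the elementary dichotomy forced by $a+b=l\leqslant r_1+r_2+1$: it is impossible to have simultaneously $a\geqslant r_1+1$ and $b\geqslant r_2+1$, so every $l$-cut satisfies $a\leqslant r_1$ or $b\leqslant r_2$.

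In the first case, $a\leqslant r_1$, I would treat the first code as an $a$-uniform subspace and invoke Lemma~\ref{rlem}, which makes $\mathrm{Tr}_{\overline{T_A}}[\dyad{\psi_i^{(1)}}{\psi_j^{(1)}}]\sim\delta_{ij}I_{T_A}$; the cross terms vanish and the double sum collapses to $\frac1K\,I_{T_A}\otimes\mathrm{Tr}_{\overline{T_B}}[P_2]$, with $P_2$ the codespace projector of the second code. Since $b=l-a\leqslant l\leqslant n_2-r_2$, Lemma~\ref{TrQMDS} (a reduction that is maximally mixed on $n_2-r_2$ parties stays so under further tracing) gives $\mathrm{Tr}_{\overline{T_B}}[P_2]\sim I_{T_B}$, so the whole reduction is proportional to $I_{T_A}\otimes I_{T_B}=I_T$; normalization of $\ket{\Psi}$ then pins the constant to the maximally mixed value. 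The second case, $b\leqslant r_2$, is symmetric and uses $a=l-b\leqslant l\leqslant n_1-r_1$. As $T$ is arbitrary, $\ket{\Psi}$ is $l$-uniform.

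The step I expect to be the main obstacle is correctly pairing the dichotomy with the two different uniformity mechanisms. Lemma~\ref{rlem} supplies both orthogonality and the identity only while at most $r_i$ parties of a code are retained, whereas the stronger claim that the diagonal sum $P_2$ reduces to the identity for as many as $n_2-r_2$ retained parties is precisely the QMDS-specific content of Lemma~\ref{TrQMDS}. Everything hinges on the three-way minimum in Eq.~(\ref{lUni}) guaranteeing, for each $l$-cut, that one code lies in the ``$\leqslant r_i$'' regime of Lemma~\ref{rlem} while the retained parties of the other fall in the ``$\leqslant n_i-r_i$'' regime of Lemma~\ref{TrQMDS}; this is also the only place where the QMDS (rather than merely pure) hypothesis is genuinely used.
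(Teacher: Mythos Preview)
Your proof is correct and follows the same construction as the paper: form $(V_1\otimes V_2)\ket{\Omega}$ from the two encoding isometries and a maximally entangled link, then show every $l$-party reduction is maximally mixed by using Lemma~\ref{rlem} on the code with $\leqslant r_i$ retained parties and Lemma~\ref{TrQMDS} on the other. Your dichotomy argument (``$a+b\leqslant r_1+r_2+1$ forces $a\leqslant r_1$ or $b\leqslant r_2$'') is a cleaner replacement for the paper's $\lceil l/2\rceil$--$\lfloor l/2\rfloor$ case analysis, but the underlying mechanism and the role of the three-way minimum are identical.
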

\begin{proof}
    Since the two codes are pure, there are two subspaces related to them:  an $r_1$-uniform subspace $W_1$ of $\ts{\C^{D_1}}{n_1}$ and  an $r_2$-uniform subspace $W_2$ of $\ts{\C^{D_2}}{n_2}$ such that $\dim(W_1)=\dim(W_2)=K$.

    Consider two isometries $V_1\colon\C^K\rightarrow\ts{\C^{D_1}}{n_1}$ and $V_2\colon\C^K\rightarrow\ts{\C^{D_2}}{n_2}$ whose ranges coincide with $W_1$ and $W_2$, respectively. Now let us take any bipartite maximally entangled state $\ket{\phi}_{AB}$ in $\C^K\otimes\C^K$ and construct the state
    \begin{equation}
        \ket{\psi}_{S_1S_2}=(V_1\otimes V_2)\ket{\phi}_{AB},
    \end{equation}
    which belongs to $\ts{\C^{D_1}}{n_1}\otimes\ts{\C^{D_2}}{n_2}$. Here $S_{1}$ and $S_2$ denote the sets of the output parties of the isometries $V_{1}$ and $V_2$, respectively. We claim that the state $\ket{\psi}_{S_1S_2}$ is $l$-uniform, with $l$ as in Eq.~(\ref{lUni}).

\begin{figure}[t]
\includegraphics[scale=0.29]{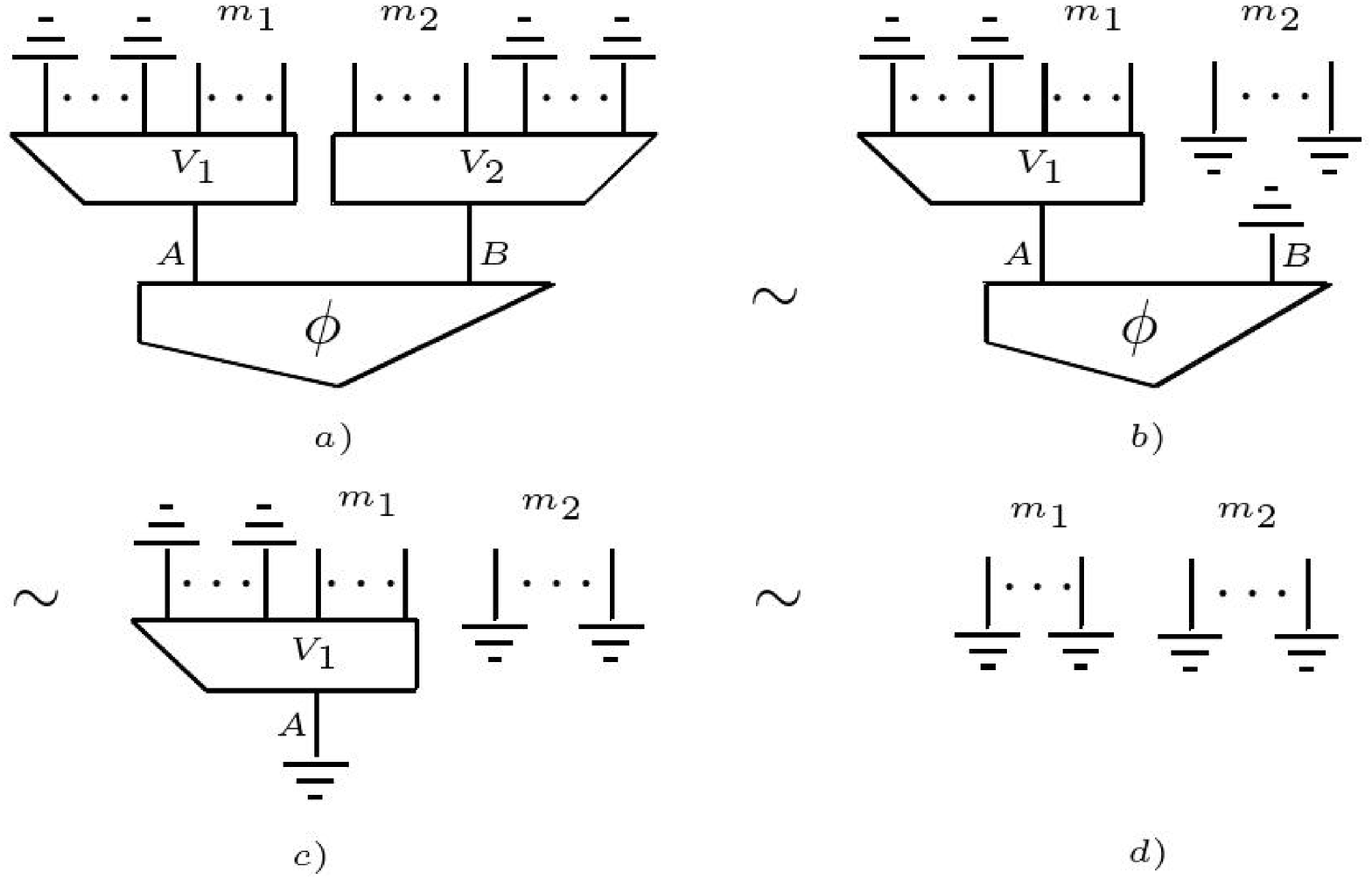}
\caption{\label{fig:lgd} Steps to prove $l$-uniformity of the state $\ket{\psi}_{S_1S_2}$.}
\end{figure}

    The underlying principle is shown on Fig~\ref{fig:lgd}. Let us assume that all subsystems of $\ket{\psi}_{S_1S_2}$ are traced out except some $m_1$ output subsystems of isometry $V_1$ and some $m_2$ output subsystems of isometry $V_2$ , as shown on Fig~\ref{fig:lgd}, a). If, for example, $m_2\leqslant r_2$, the rule from Fig.~\ref{fig:rIso} can be applied and we arrive at the situation shown on Fig~\ref{fig:lgd}, b), where isometry $V_2$ is eliminated and the state $\ket{\phi}_{AB}$ gets partially traced out. Next, the state $\ket{\phi}_{AB}$ is maximally entangled, and so its reduction  to $A$ is the maximally mixed operator $\frac1KI_A$. As a result, isometry $V_1$ acts on the identity operator $I_A$, as shown on  Fig~\ref{fig:lgd}, c). The steps $b-c)$, without taking into account the trace over output parties of $V_1$, can be written as
    \begin{equation}
        V_1\mathrm{Tr}_B\{\dyad{\phi}_{AB}\}V_1^{\dagger}=\frac1K\,V_1V_1^{\dagger}=\frac1K\,P_{\scriptscriptstyle W_1},
   \end{equation}
   where $P_{\scriptscriptstyle W_1}$ -- the orthogonal projector on subspace $W_1$, the first equality follows from maximal entanglement of $\ket{\phi}_{AB}$, the second one -- from the fact that the isometry $V_1$ has subspace $W_1$ as its range. Now if $m_1\leqslant n_1-r_1$ then, by Lemma~\ref{TrQMDS}, performing the trace over $n_1-m_1$ output subsystems of $V_1$~(Fig~\ref{fig:lgd}, c)) produces the maximally mixed state of $m_1$ parties~(Fig~\ref{fig:lgd}, d)) in addition to the maximally mixed state of $m_2$ parties obtained earlier in the first step. We conclude that if $m_1\leqslant n_1-r_1$ and $m_2\leqslant r_2$, the  reduced state of $m_1+m_2$ parties is maximally mixed. The roles of $V_1$ and $V_2$ can be interchanged, and we obtain that if $m_1\leqslant r_1$ and $m_2\leqslant n_2-r_2$, the reduced state is maximally mixed. 
   
   Now we need to determine the maximal number $l$ such that \emph{any} partition of $l=m_1+m_2$ into $m_1$ output parties of $V_1$ and $m_2$ output parties of $V_2$ yields, after performing the trace over the rest $n_1+n_2-l$ subsystems, the maximally mixed state of $l$ parties. Let us first consider partitions in which $m_2=0$. In this case the maximal value of $m_1$, for which the scheme on Fig.~\ref{fig:lgd} can still be applied,  is $n_1-r_1$, as it was shown above. This number is then an upper bound on $l$. Interchanging the roles of  $V_1$ and $V_2$ and setting $m_1=0$, we obtain another bound, $n_2-r_2$, and hence
   \begin{equation}\label{minbound}
       l\leqslant\min(n_1-r_1,\,n_2-r_2).
   \end{equation}

   Next, let us assume that $r_2>r_1$. Let $\ceil{x}$ denote the ceiling of $x$ and $\floor{x}$ denote the floor of $x$. Consider a partition of $l$  into  $m_1=\floor{\frac l2}$  and  $m_2=\ceil{\frac l2}$. If $\ceil{\frac l2}> r_2$ then $\floor{\frac l2}>r_1$ also holds, and  one cannot apply the scheme on Fig.~\ref{fig:lgd} since  neither  $V_2$ nor  $V_1$ can be eliminated in the first step a)-b) with the use of the rule from Fig.~\ref{fig:rIso}. Consequently, we can take into account only those values of $l$ that satisfy $\ceil{\frac l2}\leqslant r_2$. Accordingly, consider  a partition of $l$ into   $m_2=\ceil{\frac l2} + \alpha$ and $m_1=\floor{\frac l2}-\alpha$ for some integer $\alpha>0$ such that  $\ceil{\frac l2} + \alpha=r_2 +1$. In this case $V_2$ cannot be eliminated in  the first step of scheme on Fig.~\ref{fig:lgd}. On the other hand, the scheme can be initiated by applying the rule from Fig.~\ref{fig:rIso} with respect to $V_1$ on condition that $m_1=\floor{\frac l2}-\alpha\leqslant r_1$. If the condition is satisfied, in the step c)-d) of the scheme~(with interchanged $V_1$ and $V_2$) the reduction of $P_{\scriptscriptstyle W_2}$ to $m_2$ parties will be maximally mixed by Lemma~\ref{TrQMDS}, since $m_2\leqslant l\leqslant n_2-r_2$ by Eq.~(\ref{minbound}). This principle  continues to work for greater values of $\alpha$ (but bounded by the condition $\ceil{\frac l2} + \alpha=m_2\leqslant n_2-r_2$), as $m_1$ gets smaller. It is clear that partitions with $\alpha<0$ will also work, as the step a)-b) will be initiated with the use of $V_2$. To sum up, the maximal possible value of $l$ satisfies
   \begin{equation}\label{obtl}
       \ceil[\Big]{\frac l2} + \alpha=r_2 +1,\quad
       \floor[\Big]{\frac l2}-\alpha= r_1.
   \end{equation}
   Adding these two equalities, we obtain
   \begin{equation}\label{lfound}
       l = \ceil[\Big]{\frac l2} +  \floor[\Big]{\frac l2} = r_1 + r_2 + 1.
   \end{equation}

   When $r_1=r_2\equiv r$, we can choose~(odd) $l$ such that $\ceil{\frac l2}=r+1$ and $\floor{\frac l2}=r$. For partitions with $m_1=\floor{\frac l2}$ and $m_2 = \ceil{\frac l2}$ and, vice versa, $m_1 = \ceil{\frac l2}$ and $m_2=\floor{\frac l2}$, the scheme on Fig.~\ref{fig:lgd} is initiated with the use of $V_1$ and $V_2$, respectively. For all other partitions, which can be parameterized with integer $\alpha$ as  $m_1=\ceil{\frac l2}+\alpha$ and  $m_2=\floor{\frac l2}-\alpha$ or vice versa, the scheme also works by the analysis similar to that in the above paragraph. Consequently,
   \begin{equation}
       l = \ceil[\Big]{\frac l2} +  \floor[\Big]{\frac l2} = 2r + 1,
   \end{equation}
   which is just a special case of Eq.~(\ref{lfound}).
\end{proof}

Immediate application of Theorem~\ref{codeampl}, with the use of the correspondence between $r$-uniform states and $1$-dimensional pure quantum codes, produces
\begin{corollary}\label{codecol}
    Let $((n,\,K,\,d))_D$ be a QMDS code with $K>1$. Then there exists a pure $((2n,\,1,\,d'))_D$ code with distance
    \begin{equation}
        d'=\min(n-d+2,\,2d).
    \end{equation}
\end{corollary}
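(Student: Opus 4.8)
The plan is to specialize Theorem~\ref{codeampl} to two identical copies of the given code and then read off the result through the state--code correspondence recalled in Section~\ref{sec:prel}. First I would apply Theorem~\ref{codeampl} with both input codes taken to be the given $((n,\,K,\,d))_D$ QMDS code, i.~e. set $n_1=n_2=n$, $D_1=D_2=D$, $d_1=d_2=d$ and $K_1=K_2=K>1$, so that $r_1=r_2=d-1\equiv r$. The hypothesis $K_1=K_2$ of the theorem then holds automatically, and the bipartite maximally entangled state $\ket{\phi}_{AB}\in\C^K\otimes\C^K$ used in its proof exists precisely because $K>1$. Note also that the two output registers combine into the homogeneous space $\ts{\C^D}{n}\otimes\ts{\C^D}{n}=\ts{\C^D}{2n}$, which is exactly the single-alphabet setting required for a code of the form $((2n,\,1,\,d'))_D$.

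With these substitutions Theorem~\ref{codeampl} produces an $l$-uniform state in $\ts{\C^D}{2n}$ with
\begin{equation*}
    l=\min(n-r,\,n-r,\,2r+1)=\min(n-r,\,2r+1).
\end{equation*}
The next step is purely arithmetic: substituting $r=d-1$ gives $n-r=n-d+1$ and $2r+1=2d-1$, so that $l=\min(n-d+1,\,2d-1)$.

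Finally I would translate this single $l$-uniform vector back into the language of codes. The one-dimensional subspace it spans is a $1$-dimensional $l$-uniform subspace, and by the equivalence between pure QECCs and uniform subspaces stated in Section~\ref{sec:prel}, read in the case $K=1$, this is precisely a pure $((2n,\,1,\,l+1))_D$ code. Setting $d'=l+1$ and inserting the value of $l$ found above gives
\begin{equation*}
    d'=\min(n-d+1,\,2d-1)+1=\min(n-d+2,\,2d),
\end{equation*}
which is the asserted distance.

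Since the substantive work has already been done inside Theorem~\ref{codeampl}, I do not expect a genuine obstacle; the corollary amounts to a relabelling together with the shift $d'=l+1$. The only point deserving attention is the invocation of the state--code correspondence in the degenerate $K=1$ regime, where the ``code'' is one-dimensional: one must confirm that the general equivalence cited in Section~\ref{sec:prel} does specialize to the statement that a single $l$-uniform vector corresponds to a pure $((2n,\,1,\,l+1))_D$ code, i.~e. that $l$-uniformity of the vector is exactly the purity-plus-distance-$(l+1)$ condition for a one-element codeword set.
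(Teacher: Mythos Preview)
Your proof is correct and follows exactly the paper's approach: the corollary is stated there as an ``immediate application of Theorem~\ref{codeampl}, with the use of the correspondence between $r$-uniform states and $1$-dimensional pure quantum codes,'' which is precisely what you do by setting $n_1=n_2=n$, $D_1=D_2=D$, $r_1=r_2=d-1$, computing $l=\min(n-d+1,\,2d-1)$, and then passing to $d'=l+1$.
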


As an example, consider a $((4,\,4,\,2))_2$ code, which can be the stabilizer $[[4,\,2,\,2]]_2$ code obtained from the well-known $[[5,\,1,\,3]]_2$ code with the use of Theorem~20 of Ref.~\cite{Rains98}. Combining  $[[4,\,2,\,2]]_2$ with itself produces an $8$-qubit  $3$-uniform state. It is known that $\floor{\frac n2}$-uniform states of $n$ qubits~(\emph{absolutely maximally entangled~(AME) states}) don't exist for $n>6$~\cite{Scott04,HGS17,Rains98,Rains99a}, so the constructed state has maximal possible uniformity parameter. 

To give an example with heterogeneous systems, let us combine pure codes $((4,\, 4,\, 2))_2$ and $((5,\,4,\,3))_4$. The latter code can be produced by tensoring $((5,\,2,\,3))_2$ with itself~(Theorem 14 of Ref.~\cite{Rains99}). By the construction in the proof of Theorem~\ref{codeampl}, the two codes yield a $3$-uniform state in $\ts{\C^2}{4}\otimes\ts{\C^4}{5}$.  We can then obtain $r$-uniform subspaces by eliminating some parties of this state, but in this case some produced subspaces will demonstrate better values of $r$ than those predicted by Theorem~\ref{TrGen} owing to the following observation.
\begin{figure}[b]
\includegraphics[scale=0.37]{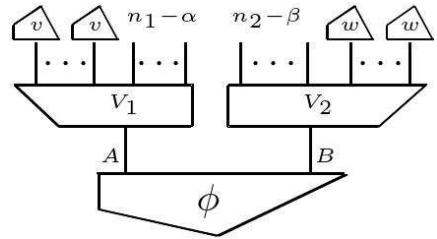}
\caption{\label{fig:ab} Construction of the states which span the $l$-uniform subspace $W$ of $\ts{\C^{D_1}}{n_1-\alpha}\otimes\ts{\C^{D_2}}{n_2-\beta}$ Hilbert space.}
\end{figure}
\begin{corollary}\label{abcor}
     Let $((n_1,\, K_1,\,d_1))_{D_1}$ and $((n_2,\, K_2,\,d_2))_{D_2}$ be two QMDS codes with $K_1=K_2\equiv K>1$. Denote $r_1\equiv d_1-1$ and $r_2\equiv d_2-1$. Then for any integers $0\leqslant\alpha\leqslant r_1$ and $0\leqslant\beta\leqslant r_2$ there exists an $l$-uniform subspace $W$ of $\ts{\C^{D_1}}{n_1-\alpha}\otimes\ts{\C^{D_2}}{n_2-\beta}$ Hilbert space such that 
     \begin{multline}\label{abparams}
         \dim(W)=D_1^{\,\,\alpha}D_2^{\beta},\\
        l=\min(n_1-r_1-\alpha,\,n_2-r_2-\beta,\\
        r_1+r_2+1-\alpha-\beta). 
     \end{multline}
\end{corollary}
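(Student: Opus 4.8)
The plan is to build $W$ from the seed state $\ket{\psi}_{S_1S_2}=(V_1\otimes V_2)\ket{\phi}_{AB}$ produced in Theorem~\ref{codeampl} by eliminating $\alpha$ of the output parties of $V_1$ together with $\beta$ of the output parties of $V_2$, exactly as in the practical elimination method following Theorem~\ref{TrGen} and as depicted on Fig.~\ref{fig:ab}. Concretely, I fix orthonormal bases $\{\ket{u_a}\}$ and $\{\ket{w_b}\}$ on the parties to be removed and take the partial scalar products $\ket{\mu_{a,b}}=(\bra{u_a}\otimes\bra{w_b})\ket{\psi}_{S_1S_2}$ as the spanning set of $W$. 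First I would verify that these $D_1^{\alpha}D_2^{\beta}$ vectors are nonzero and orthogonal: since $\alpha\leqslant r_1$ and $\beta\leqslant r_2$, the reduction of $\ket{\psi}$ to the set $E$ of eliminated parties is maximally mixed (apply the scheme of Fig.~\ref{fig:lgd} while retaining $\alpha$ parties on the $V_1$ side and $\beta$ on the $V_2$ side, eliminating $V_2$ by Fig.~\ref{fig:rIso} and using Lemma~\ref{TrQMDS} on $P_{W_1}$). The computation in the practical method then gives $\langle\mu_{a,b}|\mu_{a',b'}\rangle\sim\delta_{aa'}\delta_{bb'}$, so $\dim(W)=D_1^{\alpha}D_2^{\beta}$ with projector $P_W\sim\mathrm{Tr}_E[\dyad{\psi}]$.

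To establish $l$-uniformity with $l$ as in Eq.~(\ref{abparams}), I would invoke Lemma~\ref{rlem}: it suffices to show that for every $l$-element subset $S$ of the retained parties and every pair of spanning vectors one has $\mathrm{Tr}_{S^c}[\dyad{\mu_{a,b}}{\mu_{a',b'}}]\sim\delta_{aa'}\delta_{bb'}\,I_S$. The crucial reformulation is that, because the eliminated legs are contracted with the fixed basis vectors, this reduction equals $(\bra{u_a}\otimes\bra{w_b})\,\rho_{E\cup S}\,(\ket{u_{a'}}\otimes\ket{w_{b'}})$, where $\rho_{E\cup S}$ is the reduction of $\ket{\psi}$ to $E\cup S$. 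Hence everything reduces to proving $\rho_{E\cup S}\sim I_{E\cup S}$: the factor $I_E$ then collapses the basis overlaps to $\delta_{aa'}\delta_{bb'}$ and leaves $I_S$.

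I would then run the scheme of Fig.~\ref{fig:lgd} on $\rho_{E\cup S}$, observing that it retains $\alpha+m_1$ parties on the $V_1$ side and $\beta+m_2$ on the $V_2$ side, where $m_1+m_2=l$ records how $S$ splits between the two codes. This is the proof of Theorem~\ref{codeampl} with the party counts shifted by $(\alpha,\beta)$, and the verification splits into two cases. If $\beta+m_2\leqslant r_2$ one eliminates $V_2$ by Fig.~\ref{fig:rIso} and applies Lemma~\ref{TrQMDS} to $P_{W_1}$, which requires $\alpha+m_1\leqslant n_1-r_1$; this holds since $m_1\leqslant l\leqslant n_1-r_1-\alpha$. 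Otherwise $\beta+m_2\geqslant r_2+1$, and the symmetric route through $V_1$ requires $\beta+m_2\leqslant n_2-r_2$ and $\alpha+m_1\leqslant r_1$; the first follows from $m_2\leqslant l\leqslant n_2-r_2-\beta$, while the second follows from $m_1=l-m_2\leqslant l-(r_2+1-\beta)\leqslant r_1-\alpha$ using $l\leqslant r_1+r_2+1-\alpha-\beta$. Thus $\rho_{E\cup S}\sim I_{E\cup S}$ for every admissible $S$, proving that $W$ is $l$-uniform.

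I expect the main obstacle to be conceptual rather than computational: one must recognise that certifying uniformity of the whole subspace $W$, as opposed to the single seed state $\ket{\psi}$, forces the eliminated parties $E$ to be retained alongside $S$ when the orthogonality relations of Lemma~\ref{rlem} are checked, because the off-diagonal terms carry the basis covectors on $E$. It is precisely this enlargement of the retained set from $S$ to $E\cup S$ that shifts the three bounds of Theorem~\ref{codeampl} by $\alpha$, $\beta$ and $\alpha+\beta$, yielding Eq.~(\ref{abparams}); overlooking it would instead reproduce only the weaker value coming from a naive iteration of Theorem~\ref{TrGen}. The remaining case analysis is then a routine repeat of the balancing argument already carried out for Theorem~\ref{codeampl}.
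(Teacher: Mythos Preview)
Your proposal is correct and follows essentially the same approach as the paper: both start from the seed state $(V_1\otimes V_2)\ket{\phi}_{AB}$ of Theorem~\ref{codeampl}, eliminate $\alpha$ legs on the $V_1$ side and $\beta$ on the $V_2$ side via the practical partial-scalar-product method, and then rerun the Fig.~\ref{fig:lgd} analysis with the eliminated legs frozen in place, which is exactly the paper's remark that ``the parties in these positions cannot be traced out'' and hence the three bounds of Eq.~(\ref{lUni}) shift by $\alpha$, $\beta$, $\alpha+\beta$. Your write-up is in fact more explicit than the paper's: you spell out that the Lemma~\ref{rlem}-type condition on the spanning vectors amounts to $\rho_{E\cup S}\sim I_{E\cup S}$, and you carry out the two-case verification of this identity in full, whereas the paper only says ``Eq.~(\ref{obtl}) is also modified, with $r_1$ and $r_2$ replaced by $r_1-\alpha$ and $r_2-\beta$''. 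One small wording point: you cite Lemma~\ref{rlem} for the sufficiency direction, but that lemma is stated only as a necessary condition; the converse you actually use (orthogonality of all cross-reductions implies uniformity of the span) is immediate by linearity, so this is not a gap.
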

\begin{proof}
    By Theorem~\ref{codeampl}, we can construct a state in  $\ts{\C^{D_1}}{n_1}\otimes\ts{\C^{D_2}}{n_2}$ Hilbert space with the uniformity parameter given by Eq.~(\ref{lUni}). Next,   
    we eliminate $\alpha+\beta$ parties of this state by the  procedure   described after the proof of Theorem~\ref{TrGen}. Let us take $\alpha$ orthonormal systems of vectors $\{\ket{v_i^{(\mu)}}\}$, $\mu=1,\,\ldots,\,\alpha$, $i=0,\,\ldots,\,D_1-1$, each system being a basis for the corresponding $\C^{D_1}$ Hilbert space. Similarly, we take $\beta$ orthonormal systems of vectors $\{\ket{w_j^{(\nu)}}\}$, $\nu=1,\,\ldots,\,\beta$, $j=0,\,\ldots,\,D_2-1$, each in its own $\C^{D_2}$ Hilbert space. Next,  we pick some specific vectors $\ket{v^{(1)}_{i_1}},\,\ldots,\,\ket{v^{(\alpha)}_{i_{\alpha}}}$, one from each system, and eliminate $\alpha$ output parties of the isometry $V_1$ by joining them with the inputs of the chosen vectors. Similarly, we pick  $\beta$ specific vectors $\ket{w^{(1)}_{j_1}},\,\ldots,\,\ket{w^{(\beta)}_{j_{\beta}}}$ and eliminate $\beta$ output parties of the isometry $V_2$~(see Fig.~\ref{fig:ab}, the indices of vectors $v,\,w$ are omitted). As a result, we obtain
    \begin{equation}\label{ns}   \bra{v^{(1)}_{i_1}}\ldots\bra{v^{(\alpha)}_{i_{\alpha}}}\bra{w^{(1)}_{j_1}}\ldots\bra{w^{(\beta)}_{j_{\beta}}}(V_1\otimes V_2)\ket{\phi}_{AB},
    \end{equation}
    one of the $D_1^{\,\,\alpha}D_2^{\beta}$ states that span the subspace $W$ of $\ts{\C^{D_1}}{n_1-\alpha}\otimes\ts{\C^{D_2}}{n_2-\beta}$ Hilbert space. All such states are hence indexed by the numbers $i_1,\,\ldots,\,i_{\alpha},\,j_1,\,\ldots,\,j_{\beta}$, which represent the correspondence  between  tuples of vectors $v,\, w$ and the basis states of $W$.
    
    The uniformity of the state in Eq.~(\ref{ns}) can be analyzed with the use of Fig.~\ref{fig:ab} and the same reasoning as in the proof of Theorem~\ref{codeampl}. The vectors $v$ and $w$ take up $\alpha$ and $\beta$ positions out of $n_1$ and $n_2$ output parties of the isometries $V_1$ and $V_2$, respectively. The parties in these positions cannot be traced out, and this results in modifying the bounds on $l$ in Eq.~(\ref{minbound}):
    \begin{equation}
       l\leqslant\min(n_1-r_1-\alpha,\,n_2-r_2-\beta).
   \end{equation}

Eq.~(\ref{obtl}) is also modified, with $r_1$ and $r_2$ replaced by $r_1-\alpha$ and $r_2-\beta$, respectively. As a result, we obtain the expression for $l$ in Eq.~(\ref{abparams}).
\end{proof}

Earlier a $3$-uniform state in $\ts{\C^2}{4}\otimes\ts{\C^4}{5}$ was obtained with the use of Theorem~\ref{codeampl} from pure codes $((4,\, 4,\, 2))_2$ and $((5,\,4,\,3))_4$. Eliminating one party with dimension $2$ and one party with dimension $4$, or, in terms of Corollary~\ref{abcor},  setting $\alpha=\beta=1$, we produce an $8$-dimensional $2$-uniform subspace of $\ts{\C^2}{3}\otimes\ts{\C^4}{4}$ Hilbert space. We stress that the original state has special structure and, as a result, after the elimination of $2$ parties the produced subspace has higher value $l=2$ in comparison with $l=1$ predicted by Theorem~\ref{TrGen}.

\subsection{\label{sec:res:comp} Comparison with mixed orthogonal arrays method and further constructions}

Mixed orthogonal arrays~(MOAs)~\cite{HPS92,HSS99} in its specific form, irredundant MOAs~(IrMOAs)~\cite{GBZ16},  have become a powerful tool in construction of $r$-uniform states in heterogeneous systems~\cite{GBZ16,PZFZ21,SSCZ22}. In this subsection we present several applications of the compositional approach that allow us to reproduce or extend some results obtained with the use of IrMOAs~(in terms of the minimal number of parties  for a given  uniformity parameter). Such a comparison also reveals some weaknesses of the presented in this paper approach.

In general it becomes more difficult to find examples of $r$-uniform states in heterogeneous systems when  the number of parties gets smaller. Let us consider some results from Ref.~\cite{PZFZ21}.
\begin{prop}[Corollary~3.2 of Ref.~\cite{PZFZ21}.]\label{prop1r}
\begin{enumerate}
$2$-uniform states exist for the following configurations:
    \item $\C^3\otimes\ts{\C^2}{N}$ for  $N\geqslant8$.
    \item 
$\ts{\C^3}{2}\otimes\ts{\C^2}{N}$ for $N\geqslant 12$.    \item
$\ts{\C^3}{3}\otimes\ts{\C^2}{N}$ for $N\geqslant 11$ and $\ts{\C^3}{4}\otimes\ts{\C^2}{N}$ for $N\geqslant 10$.
\end{enumerate}
\end{prop}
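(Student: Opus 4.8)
The plan is to build every configuration out of the two compositional primitives already established: the glueing of Lemma~\ref{glue} and the isometric mapping of a suitably entangled state through a pure code, exactly as in the $\C^2\otimes\ts{\C^3}{5}$ construction of Figs.~\ref{fig:2unitra}--\ref{fig:2unitrb}. Since all four items ask only for $2$-uniform \emph{states} (so $K=1$ on the output side), the entire task is to arrange that every two-party reduction comes out maximally mixed. I would supply the qutrits in one of two ways: either glue them in as a self-contained $2$-uniform qutrit block, or entangle them with the logical space of a pure qubit code so that the code's qubits simultaneously carry the qutrit correlations.

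For the cases with $q=1,2,3$ qutrits (items~1, 2 and the first half of item~3) I would use the isometric route. Fix a \emph{pure} code $((N,K,3))_2$, whose codespace is a $2$-uniform subspace $W\subset\ts{\C^2}{N}$ of dimension $K$ with encoding isometry $V\colon\C^K\to\ts{\C^2}{N}$. Choose $\ket{\psi}\in\ts{\C^3}{q}\otimes\C^K$ maximally entangled across the cut ``the $q$ qutrits $\mid$ the code register,'' which is possible exactly when $K\ge 3^{q}$, and set $\ket{\Psi}=(I\otimes V)\ket{\psi}$ with $I$ the identity on the qutrits. To verify $2$-uniformity I would run the same three-case argument as in the example: (a) a reduction to two output qubits of $V$ is maximally mixed directly by the rule of Fig.~\ref{fig:rIso}, which discards both the qutrits and the code register; (b) a reduction to one qubit together with one qutrit first collapses the code by Fig.~\ref{fig:rIso} and is then finished by the maximal entanglement of $\ket{\psi}$, which leaves the surviving qutrit marginal equal to $\frac13 I$; (c), present when $q\ge2$, a reduction to two qutrits traces away all qubits and the surviving two-qutrit marginal of $\ket{\psi}$ equals $\frac19 I$, again by maximal entanglement. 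Thus $\ket{\Psi}$ is a $2$-uniform state of $\ts{\C^3}{q}\otimes\ts{\C^2}{N}$, and the whole problem collapses to producing a pure $((N,K,3))_2$ code with $K\ge 3^{q}$.

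For $q=4$ (the second half of item~3) the qutrit block stands on its own: a $2$-uniform four-qutrit state exists (AME$(4,3)$, i.e.\ the pure $((4,1,3))_3$ code), so Lemma~\ref{glue} lets me simply tensor it with any $2$-uniform $N$-qubit state to get a $2$-uniform state of $\ts{\C^3}{4}\otimes\ts{\C^2}{N}$; since $2$-uniform qubit states exist for all $N\ge5$, this glueing route actually pushes the threshold well below the quoted $N\ge10$. The same self-contained option is unavailable for $q\le3$, because no pure state of three or fewer qutrits can have all its two-party marginals maximally mixed, which is precisely why those items force the isometric construction above.

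The main obstacle is therefore purely code-theoretic: to meet each stated threshold I must exhibit, at the boundary length and at \emph{every} larger $N$, a pure distance-$3$ qubit code whose dimension reaches $3^{q}$. At the boundary one can pin this down with explicit codes — for instance the pure $[[8,3,3]]_2$ code gives $K=8\ge3$ for item~1 at $N=8$ — and the quantum Hamming bound $K(1+3N)\le 2^N$ confirms that the required dimension is not ruled out at the other thresholds. The delicate step is extending to all $N\ge N_{\min}$: I would combine a short list of explicit pure distance-$3$ codes covering the first few lengths with repeated glueing (Lemma~\ref{glue}) against small $2$-uniform qubit blocks to step $N$ upward, checking at each stage that $K$ stays at least $3^{q}$. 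It is exactly here — certifying purity and a large enough codespace simultaneously across a whole family of lengths — that the approach is most fragile, and where, depending on which qubit codes actually exist, it will either beat or fall short of the orthogonal-array thresholds.
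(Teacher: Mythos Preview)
Your approach is essentially the same as the paper's, and your three-case verification of $2$-uniformity is correct. But you should be aware of two things.

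First, this proposition is \emph{cited} from Ref.~\cite{PZFZ21} (an orthogonal-array paper); the present paper does not prove it. What follows the statement in the paper is an attempt to \emph{reproduce} it with the compositional tools, and that attempt is only partial: the paper explicitly states that it \emph{cannot} cover $N=9,11$ in item~1, $N=13,15$ in item~2, and $N=9,11$ in the first half of item~3. So the ``fragility'' you correctly flag at the end is not a hypothetical worry --- it is exactly where the compositional method falls short, and the paper says so. Your proof sketch therefore does not establish the full proposition; nor does the paper's own discussion.

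Second, for $q=3$ the paper does something more economical than you. Rather than demanding a maximally entangled state in $\ts{\C^3}{3}\otimes\C^K$ with $K\geqslant 27$, it starts from the four-qutrit AME state $[[4,0,3]]_3$ and applies the qubit isometry to a \emph{single} qutrit party. Because the remaining three qutrits already have maximally mixed two-party marginals (that is what AME means), the case analysis goes through with only $K\geqslant 3$ on the isometry's input. This is why the paper can reach $N=8$ for $\ts{\C^3}{3}\otimes\ts{\C^2}{N}$, actually extending the quoted bound. For $q=2$ the two routes coincide (the AME state viewed across the $2|2$ cut is your maximally entangled $\ket{\psi}$ with $K=9$), and for $q=4$ you agree with the paper's glueing argument; the paper additionally pushes $q=4$ down to $N=4$ by splitting $[[4,0,3]]_6$ via Lemma~\ref{splemma}.
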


We can reproduce the first result for $N=8$. The procedure is as follows. The pure code $((5,\,2,\,3))_2$ is combined with itself by the construction of Theorem~\ref{codeampl}, which results in a $10$-qubit $3$-uniform state, i.~e. a $((10,\,1,\,4))_2$ pure code~(Corollary~\ref{codecol}). Next, by eliminating two parties, by Corollary~\ref{abcor} we obtain a $4$-dimensional $8$-qubit $2$-uniform subspace, i.~e. a pure $((8,\,4,\,3))_2$ code. Now we have an encoding isometry which maps vectors from $\C^4$ to the $2$-uniform code space~(briefly, the "$8$-qubit isometry"). Finally, we can  take a maximally entangled state in $\C^3\otimes\C^3$ and act on one of its parties with the obtained isometry, and the construction here will be similar to the one presented on Fig.~\ref{fig:23st2}. The resulting state, which belongs to $\C^3\otimes\ts{\C^2}{8}$, is $2$-uniform. For larger values of $N$ we can use the same auxiliary state and   various combinations of isometries and, if necessary, glue them together with the use of Lemma~\ref{glue}. As an example, the isometry for $N=10$, which maps $\C^4$ to  $10$-qubit $2$-uniform subspace, can be obtained from glueing the  subspace of the code $[[5,\,1,\,3]]_2$ with itself. In other words, $10$-qubit isometry is obtained from glueing $5$-qubit isometry with itself.   Next, by eliminating one party of a $((8,\,1,\,4))_2$ state, we obtain an isometry which  maps vectors from $\C^2$ to the $2$-uniform $7$-qubit space~(the "$7$-qubit isometry"). By the same procedure, from the state $((10,\,1,\,4))_2$ we obtain the $9$-qubit isometry. Now, the isometry for $N=12$ can be obtained from glueing $7$-qubit and $5$-qubit isometries, for $N=13$ -- from $5$-qubit and $8$-qubit ones, and so forth. We haven't found any appropriate isometries to construct the states with $N=9$ and $N=11$~(those that we've found have input dimension $2$, which is less than the local dimension of the second party of the auxiliary state). To conclude, we \emph{cannot} reproduce the first result of Proposition~\ref{prop1r} only for $N=9$ and $N=11$ with the current approach.

\begin{figure}[t]
\includegraphics[scale=0.3]{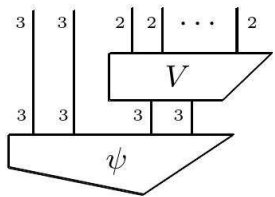}
\caption{\label{fig:332p8} An isometry $V$ acting on a joint subsystem of two parties with local dimensions $3$.}
\end{figure}

The second result from Proposition~\ref{prop1r}  is harder to reproduce.  The reason for that is as follows: we can take a $4$-qutrit $2$-uniform state, which can be, for example, the graph state of Ref.~\cite{Helw13} or a QMDS code $[[4,\,0,\,3]]_3$ from Corollary~3.6 of Ref.~\cite{JLLX10}, but now we need to act with an isometry on its two parties, i.~e., on a compound subsystem with local dimension $3\times3=9$, as shown on Fig.~\ref{fig:332p8}. The $8$-qubit isometry that was used before is not appropriate here since it has input dimension equal to $4$, which is less than the output dimension of the two combined parties. A proper isometry can be constructed from other error correcting codes with the use of the splitting property for  $r$-uniform subspaces, which is a direct consequence of the splitting method for $r$-uniform states appeared earlier in Refs.~\cite{GBZ16,SSCZ22}.
\begin{lemma}\label{splemma}
  Let $W$ be an $r$-uniform subspace of Hilbert space with the set $S=\{d_1,\,\ldots,\,d_n\}$ of local dimensions.  Let $d_i=d'_id''_i$ for some $i\colon1\leqslant i\leqslant n$ and some integer $d'_i,\,d''_i > 1$. Then there exists an $r$-uniform subspace of Hilbert space with the set of local dimensions given by $\{d'_i,\,d''_i\}\cup \left[S\setminus \{d_i\}\right]$ and having the same dimension as the original subspace. 
  \end{lemma}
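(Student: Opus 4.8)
The plan is to realize the splitting as a fixed unitary re-identification of the $i$-th tensor factor and then verify $r$-uniformity reduction-by-reduction. Fix any isomorphism $U\colon\C^{d_i}\to\C^{d'_i}\otimes\C^{d''_i}$ (for instance the map sending the computational basis vector $\ket{a\,d''_i+b}$ to $\ket{a}\otimes\ket{b}$), and let $\mathcal U$ denote the operator on $\C^{d_1}\otimes\cdots\otimes\C^{d_n}$ that applies $U$ at party $i$ and the identity on every other party. Since $\mathcal U$ is unitary, $W':=\mathcal U(W)$ is a subspace of the new Hilbert space with $\dim W'=\dim W$, and every vector of $W'$ has the form $\ket{\psi'}=\mathcal U\ket{\psi}$ with $\ket{\psi}\in W$. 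Hence it suffices to show that each such $\ket{\psi'}$ is $r$-uniform in the system obtained by replacing party $i$ by the pair $\{i',i''\}$ with local dimensions $d'_i,d''_i$.

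First I would fix an arbitrary $r$-element subset $T'$ of the new party set $(\{1,\dots,n\}\setminus\{i\})\cup\{i',i''\}$ and split the check into three cases according to $T'\cap\{i',i''\}$. The two guiding observations are that tracing out the new pair $\{i',i''\}$ in its entirety is the same as tracing out party $i$ in the original system, and that $U\,I_{\{i\}}\,U^{\dagger}=I_{\{i'\}}\otimes I_{\{i''\}}$ (conjugation of the identity by a unitary is again the identity). When $T'$ meets neither $i'$ nor $i''$, the reduction of $\ket{\psi'}$ to $T'$ equals the reduction of $\ket{\psi}$ to the $r$-element set $T'\subseteq\{1,\dots,n\}\setminus\{i\}$, which is maximally mixed by $r$-uniformity of $\ket{\psi}$.

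The genuinely informative case, and the one I expect to be the crux, is $\lvert T'\cap\{i',i''\}\rvert=1$, say $T'=\{i'\}\cup T_0$ with $T_0$ of size $r-1$ among the unsplit parties and $d_{T_0}:=\prod_{j\in T_0}d_j$. Here the reduction to $T'$ is obtained from the reduction of $\ket{\psi}$ to the \emph{$r$-element} set $\{i\}\cup T_0$ by transporting through $U$ and then tracing out $i''$; using $r$-uniformity of $\ket{\psi}$ this gives $\mathrm{Tr}_{\{i''\}}\bigl[\tfrac{1}{d_i d_{T_0}}(I_{\{i'\}}\otimes I_{\{i''\}})\otimes I_{T_0}\bigr]=\tfrac{1}{d'_i d_{T_0}}\,I_{\{i'\}\cup T_0}$, the maximally mixed state on $T'$ (using $d_i=d'_id''_i$). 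This is exactly where the full strength of $r$-uniformity, rather than merely $(r-1)$-uniformity, is needed: keeping one half of the split party together with $r-1$ outside parties corresponds to an $r$-party reduction of the original state, so anything weaker would not close the argument. The remaining case $\{i',i''\}\subseteq T'$, with $T'=\{i',i''\}\cup T_0$ and $\lvert T_0\rvert=r-2$, is handled identically but now invokes only $(r-1)$-uniformity of $\ket{\psi}$ (automatic since $\ket{\psi}$ is $r$-uniform): the reduction to $T'$ is the image under $U$ of the reduction of $\ket{\psi}$ to $\{i\}\cup T_0$, namely $\tfrac{1}{d_i d_{T_0}}(I_{\{i'\}}\otimes I_{\{i''\}})\otimes I_{T_0}$, which is maximally mixed on $T'$ because $d'_id''_id_{T_0}=d_id_{T_0}$.

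Since all three cases produce maximally mixed $r$-party reductions and $T'$ was arbitrary, every $\ket{\psi'}\in W'$ is $r$-uniform, so $W'$ is an $r$-uniform subspace with local dimensions $\{d'_i,d''_i\}\cup[S\setminus\{d_i\}]$ and $\dim W'=\dim W$, as claimed. No step beyond bookkeeping of which original subset each new subset corresponds to presents real difficulty; the only point requiring genuine care is the dimensional matching in the single-overlap case above, where the factor $d''_i/d_i=1/d'_i$ must combine correctly with $r$-uniformity of the original state.
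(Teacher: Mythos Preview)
Your proof is correct and follows essentially the same approach as the paper: realize the split as a unitary identification on party $i$ and verify $r$-uniformity of every image state by a case analysis on how the chosen $r$-element subset intersects $\{i',i''\}$. The paper's published proof is terser, delegating the state-level claim to the splitting method of earlier references, whereas you spell out the three cases explicitly; but the underlying argument (and in particular the ``crux'' single-overlap case, where the full $r$-uniformity of the original state is invoked on the set $\{i\}\cup T_0$) is identical.
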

  \begin{proof}
     The subspace in question can be obtained from the original one by splitting the $i$-th subsystem of each state in $W$ into two smaller ones, $i'$ and $i''$, with local dimensions $d'_i$ and $d''_i$, respectively.  Each  newly obtained state is $r$-uniform, as  follows from the splitting method described in Refs.~\cite{GBZ16,SSCZ22}. Consequently, a subspace, which consists of such states, is $r$-uniform.
  \end{proof}
  
Now we can return to the construction of a $2$-uniform state in $\ts{\C^3}{2}\otimes\ts{\C^2}{12}$. Consider a pure $((6,\,16,\,3))_4$ code~(Corollary~3.6 of Ref.~\cite{JLLX10}). By splitting each ququart into $2$ qubits, by Lemma~\ref{splemma}, the code is converted into a pure $((12,\,16,\,3))_2$ code. Since its encoding isometry~(the "$12$-qubit isometry") has input dimension equal to $16$, we can act with it  on a compound subsystem consisting of two combined parties of the state $[[4,\,0,\,3]]_3$~(see Fig.~\ref{fig:332p8}). The resulting state is $2$-uniform. The isometries for larger $N$ can be obtained from glueing the $12$-qubit isometry with the described above isometries. As an example, a $17$-qubit isometry is obtained from glueing the $12$-qubit and the $5$-qubit ones~(it doesn't matter that the input dimension of the $5$-qubit isometry is $2$ -- the input dimension of the $12$-qubit isometry is $16$, and the resulting one will have the input dimension  equal to $16\times2=32$). $N=16$ is obtained from glueing the $8$-qubit isometry with itself. All other numbers $N\geqslant 18$ can be otained similarly. In addition, $N=14$ can be constructed from splitting the code  $[[7,\,3,\,3]]_4$~(Corollary~3.6 of Ref.~\cite{JLLX10}). We \emph{cannot} reproduce the second result of Proposition~\ref{prop1r} only for $N=13$ and $N=15$.

As for the third result of Proposition~\ref{prop1r}, $2$-uniform states in $\ts{\C^3}{3}\otimes\ts{\C^2}{N}$ can be obtained with action of the described above isometries on one party of the state $[[4,\,0,\,3]]_3$. As earlier, the cases $N=9$ and $N=11$ are not covered by our approach, but we can construct a state with $N=8$, which extends the proposition. The result for uniform states in $\ts{\C^3}{4}\otimes\ts{\C^2}{N}$ can be substantially extended. Consider a code $[[4,\,0,\,3]]_6$, for example, from Corollary~3.6 of Ref.~\cite{JLLX10}. By Lemma~\ref{splemma}, by splitting each subsystem with local dimension $6$ into  qubit and qutrit subsystems, we obtain a $2$-uniform state in $\ts{\C^3}{4}\otimes\ts{\C^2}{4}$. The case $N\geqslant 5$ is trivial: we can glue the state $[[4,\,0,\,3]]_3$ with a $2$-uniform state of $N$ qubits, which exists for $N\geqslant 5$ and can be obtained, for example, from graph states constructions. These observations extend the proposition from $N=10$ to $N=4$.

Gathering the above results, we can formulate
\begin{prop}[Combination of Corollary~3.2 of  Ref.~\cite{PZFZ21} with the current approach]
$2$-uniform states exist for the following configurations: 
\begin{enumerate}
    \item $\C^3\otimes\ts{\C^2}{N}$ for  $N\geqslant8$.
    \item 
$\ts{\C^3}{2}\otimes\ts{\C^2}{N}$ for $N\geqslant 12$.    \item
$\ts{\C^3}{3}\otimes\ts{\C^2}{N}$ for $N=8$ and $N\geqslant 10$ and $\ts{\C^3}{4}\otimes\ts{\C^2}{N}$ for  $N\geqslant 4$.
\end{enumerate}
\end{prop}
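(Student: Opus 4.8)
The plan is to treat the statement as an assembly result: each listed configuration is certified either by the cited Corollary~3.2 of Ref.~\cite{PZFZ21} (reproduced as Proposition~\ref{prop1r}) on the range where that result already applies, or by one of the isometric-mapping constructions developed in this subsection on the additional values. Before going through the cases I would isolate the one technical principle on which all the new constructions rest, namely that acting with an encoding isometry $V$ whose range is an $m$-qubit $2$-uniform subspace on one party --- or on a compound block of parties --- of an auxiliary $2$-uniform state again produces a $2$-uniform state, provided the local dimension of that party (or block) does not exceed the input dimension of $V$, so that it embeds into the domain of $V$. I would prove this once as a short sublemma by splitting the two-party reductions of the output into three types: (a) two output qubits of $V$, handled by the rule of Fig.~\ref{fig:rIso} (the $2$-uniformity of the range of $V$) after the auxiliary is fully discarded; (b) one surviving party of the auxiliary together with one output qubit of $V$, which reduces via Fig.~\ref{fig:rIso} to a single-party reduction of the auxiliary and so needs only its $1$-uniformity; and (c) two surviving parties of the auxiliary, which never touch $V$ and require its $2$-uniformity. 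The bipartite maximally entangled case used below is the degenerate instance of this principle where the auxiliary has a single surviving party, so type (c) is vacuous and the analysis of Figs.~\ref{fig:2unitra}--\ref{fig:2unitrb} suffices.

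With this principle in hand, configurations~1 and~2 require no new work: the ranges $\C^3\otimes\ts{\C^2}{N}$ for $N\geqslant 8$ and $\ts{\C^3}{2}\otimes\ts{\C^2}{N}$ for $N\geqslant 12$ are exactly those of Proposition~\ref{prop1r}, so I would simply invoke that result. I would nonetheless record that the present approach independently reaches all these values except $N=9,\,11$ in the first case and $N=13,\,15$ in the second, the constructions being the $8$-qubit isometry acting on a maximally entangled state of $\C^3\otimes\C^3$ (Fig.~\ref{fig:23st2}), and the $12$-qubit isometry obtained by splitting the $((6,\,16,\,3))_4$ code via Lemma~\ref{splemma} and acting on the compound two-qutrit subsystem of $[[4,\,0,\,3]]_3$ (Fig.~\ref{fig:332p8}), with larger $N$ obtained by gluing in smaller isometries through Lemma~\ref{glue}. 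The four exceptional values are then filled in by Proposition~\ref{prop1r}.

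For the third configuration I would split into its two families. For $\ts{\C^3}{3}\otimes\ts{\C^2}{N}$ the range $N\geqslant 11$ comes directly from Proposition~\ref{prop1r}, while $N=8$ and $N=10$ follow from the sublemma by acting with the $8$-qubit isometry (input dimension $4\geqslant 3$) and the $10$-qubit isometry (from gluing the $[[5,\,1,\,3]]_2$ subspace with itself by Lemma~\ref{glue}, input dimension $4\geqslant 3$) on a single qutrit party of $[[4,\,0,\,3]]_3$; the union of $\{8,10\}$ with $N\geqslant 11$ gives precisely $N=8$ and $N\geqslant 10$. For $\ts{\C^3}{4}\otimes\ts{\C^2}{N}$ the value $N=4$ follows from Lemma~\ref{splemma} applied to $[[4,\,0,\,3]]_6$, splitting each dimension-$6$ party into a qubit and a qutrit, and every $N\geqslant 5$ follows from Lemma~\ref{glue} by gluing $[[4,\,0,\,3]]_3$ to an $N$-qubit $2$-uniform state (which exists for $N\geqslant 5$, e.g.\ from graph-state constructions), so the whole range $N\geqslant 4$ is covered.

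The main obstacle, and the only genuinely new content beyond bookkeeping, is the type-(c) part of the sublemma: the bipartite arguments of Figs.~\ref{fig:2unitra}--\ref{fig:2unitrb} must be upgraded so that reductions to two surviving parties of the auxiliary are controlled by the full $2$-uniformity of that state rather than mere $1$-uniformity. A secondary point requiring care is the input-dimension bookkeeping: each reachable $N$ demands an isometry whose input dimension is at least the local dimension of the party it acts on ($3$ for configurations~1 and~3, and $9$ for the two-qutrit block of configuration~2), and it is exactly the absence of such isometries that produces the gaps at $N=9,\,11,\,13,\,15$. Those values must therefore be deferred to Proposition~\ref{prop1r}, which is why $N=9$ remains genuinely outside the $\ts{\C^3}{3}$ family.
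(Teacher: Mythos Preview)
Your proposal is correct and follows essentially the same route as the paper, which does not give a separate proof for this proposition but simply states it as ``gathering the above results''; your case-by-case bookkeeping (the $8$-, $10$-, $12$-qubit isometries, splitting of $[[4,0,3]]_6$, gluing with $[[4,0,3]]_3$, and deferring $N=9,11,13,15$ to Proposition~\ref{prop1r}) matches the paper's preceding discussion exactly. The only addition is your explicit sublemma with the three reduction types (a)--(c), which the paper leaves implicit in the pattern of Figs.~\ref{fig:2unitra}--\ref{fig:2unitrb} and \ref{fig:332p8}; this is a helpful clarification but not a different argument.
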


Let us also analyze some results of Ref.~\cite{SSCZ22}.
\begin{prop}[Theorem 9 of Ref.~\cite{SSCZ22}]\label{prop23}
For any $d>2$, the following holds.
\begin{enumerate}
    \item There exists a $2$-uniform state in $\ts{\C^2}{2}\otimes\ts{\C^d}{N}$ for any $N\geqslant7$ and $N\ne 4d+2,\,4d+3$.
    \item There exists a $2$-uniform state in $\C^2\otimes\ts{\C^d}{N}$ for any $N\geqslant5$.
\end{enumerate}
\end{prop}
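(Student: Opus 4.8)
The plan is to reduce both items to the existence of a $2$-uniform \emph{subspace} $W\subseteq\ts{\C^d}{N}$ of the appropriate dimension, and then to attach the qubit parties by the isometric mapping already used for the state of Fig.~\ref{fig:23st2}. Recall that a $2$-uniform subspace of $\ts{\C^d}{N}$ is exactly the codespace of a pure $((N,\dim W,3))_d$ code, so the two languages are interchangeable. Given such a $W$ with encoding isometry $V\colon\C^{\dim W}\to\ts{\C^d}{N}$, for the second item I would take a state $\ket{\psi}\in\C^2\otimes\C^{\dim W}$ maximally entangled with respect to the qubit (which only needs $\dim W\geq2$) and set $\ket{\Psi}=(I_2\otimes V)\ket{\psi}$. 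For the first item I would take $\ket{\chi}\in\C^4\otimes\C^{\dim W}$ maximally entangled across the cut $\C^4\,|\,\C^{\dim W}$ (which needs $\dim W\geq4$), form $(I_4\otimes V)\ket{\chi}$, and split the $\C^4$ party into $\C^2\otimes\C^2$ by Lemma~\ref{splemma}.

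I would verify $2$-uniformity exactly as in Figs.~\ref{fig:2unitra} and~\ref{fig:2unitrb}. A two-party reduction that keeps two outputs of $V$ is maximally mixed by the rule of Fig.~\ref{fig:rIso} with $r=2$, which discards the entire input; a reduction that keeps a qubit together with one output of $V$ is maximally mixed because Fig.~\ref{fig:rIso} with a single retained output ($1\leq r=2$) returns $\tfrac1d I$ there, while the qubit marginal of the maximally entangled auxiliary state is $\tfrac12 I$; and the qubit--qubit (resp.\ $\C^4$) reduction follows by tracing out \emph{all} outputs of $V$, where $V^{\dagger}V=I$ leaves precisely the maximally mixed marginal of the auxiliary state. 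In the first item every qubit--qubit, qubit--$\C^d$ and $\C^d$--$\C^d$ reduction obtained after the split is a coarsening of one of these, so Lemma~\ref{splemma} guarantees it stays maximally mixed.

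It then remains to construct, for each admissible $N$, a $2$-uniform subspace of $\ts{\C^d}{N}$ of dimension at least $2$ (second item) or at least $4$ (first item). For the smaller lengths I would use quantum MDS codes $[[N,N-4,3]]_d$, which are pure by the remark before Lemma~\ref{TrQMDS} and have codespace dimension $d^{\,N-4}$, comfortably exceeding the required $2$ or $4$ once $N\geq5$ (resp.\ $N\geq6$). The five-qudit code $((5,d,3))_d$ is available for every $d$ and already settles $N=5$ for the second item. Longer lengths I would assemble by glueing (Lemma~\ref{glue}, which preserves the uniformity at $\min(2,2)=2$ and multiplies dimensions) a QMDS code with further copies of $((5,d,3))_d$ and with $2$-uniform qudit states; whenever a direct qudit code is unavailable but one over $\C^{d^2},\C^{d^3},\dots$ exists, I would descend to $\C^d$ by Lemma~\ref{splemma}. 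Because an initial block of consecutive lengths is available and the set of attainable lengths is closed under addition, all sufficiently large $N$ are covered.

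The hard part is this final, purely combinatorial, bookkeeping: realizing \emph{every} admissible length for \emph{every} $d>2$ --- not merely the prime-power $d$ for which the QMDS families are cleanest --- while keeping the codespace dimension at least $4$ in the first item. The excluded values $N=4d+2,\,4d+3$ are exactly the lengths that the glueing/splitting scheme fails to realize with a large enough codespace: for the small $d$ at which these lengths exceed the maximal length of the available distance-$3$ qudit MDS codes, no admissible gluing lands on them with dimension $\geq4$, whereas the weaker requirement $\dim W\geq2$ of the second item is still met, which is why that item holds for all $N\geq5$ with no gap. I would organize the verification by residue classes of $N$ modulo the lengths of the basic blocks and dispose of the boundary cases $N=7,8,\dots$ individually; confirming that $4d+2,4d+3$ are the \emph{only} uncovered lengths, and checking whether for larger $d$ these lengths are in fact reachable (which would mildly improve on the orthogonal-array statement), is where most of the effort goes.
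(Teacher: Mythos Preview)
Your scheme is close to the paper's, and item~2 is handled essentially the same way. The real gap is in your ``hard part'' for item~1: the exclusions $N=4d+2,\,4d+3$ are \emph{not} features of the isometric/glueing approach at all. Proposition~\ref{prop23} is quoted from Ref.~\cite{SSCZ22}, where it was obtained via mixed orthogonal arrays, and those two lengths are artifacts of that MOA construction. The paper's own discussion immediately after the proposition shows that the compositional method covers all $N\geqslant 6$ (and $N\geqslant 5$ when $d\geqslant 8$) with no gap whatsoever: the QMDS family $[[N,N-4,3]]_d$ of Ref.~\cite{JLLX10} supplies $2$-uniform subspaces of $\ts{\C^d}{N}$ of dimension $d^{N-4}$ for every $6\leqslant N\leqslant d^2+1$, and Lemma~\ref{glue} then reaches every larger $N$ --- including $4d+2$ and $4d+3$ --- by summing lengths from that range. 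Your assertion that ``no admissible glueing lands on them with dimension $\geqslant 4$'' is therefore false, and the residue-class bookkeeping you anticipate is unnecessary; the paper in fact \emph{lifts} the constraint rather than explaining it.

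There is also a genuine methodological difference in how the two qubit parties are attached in item~1. Instead of a maximally entangled state in $\C^4\otimes\C^{\dim W}$ followed by splitting the $\C^4$ via Lemma~\ref{splemma}, the paper starts from the $2$-dimensional codespace of $[[5,1,3]]_2$, merges three of its five qubit wires into a single $\C^8$ wire, and feeds that into $V$ (the qutrit analogue is Fig.~\ref{fig:332p8}); the qubit--qubit reduction is then maximally mixed by the $2$-uniformity of the five-qubit code rather than by maximal entanglement of an auxiliary bipartite state. Your route is valid and needs only $\dim W\geqslant 4$, whereas the paper's needs $\dim W\geqslant 8$; in exchange the paper's route outputs a $2$-dimensional $2$-uniform \emph{subspace} of $\ts{\C^2}{2}\otimes\ts{\C^d}{N}$ rather than a single state, which is exactly the strengthening recorded in the proposition that follows.
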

We can start with the $2$-uniform subspace of the code $[[5,\,1,\,3]]_2$ and act with a proper  isometry on three subsystems, i.~e. on a joint system of dimension $8$, of each vector in the code. Therefore, in addition to having a $2$-uniform subspace as its range, an appropriate isometry must have input dimension greater or equal $8$. The code family $[[N,\,N-4,\,3]]_d$, $4\leqslant N\leqslant d^2+1$, $d>2$~(Corollary~3.6 of Ref.~\cite{JLLX10}) provides us with proper isometries for $6\leqslant N\leqslant10$. In addition, the isometry with $5$ output parties, which corresponds to $[[5,\,1,\,3]]_d$, only works when  $ d\geqslant 8$, since in this case its input dimension is equal to $d$. Isometries for all other numbers, $N>10$, can be obtained by glueing the codes with $N<10$~(Lemma~\ref{glue}). As a result, we lift the constraint $N\ne 4d+2,\,4d+3$ and obtain $2$-uniform subspaces instead of just states.  

We can only reproduce the second result of Proposition~\ref{prop23}. All the described above isometries, this time including the one with $N=5$, can be used to act on one party of a maximally entangled state in $\C^2\otimes\C^2$.

Instead of a maximally entangled state in $\C^2\otimes\C^2$ we could use maximally entangled subspaces of $\C^2\otimes\C^p$, which have dimension equal to $\floor{\frac p2}$, by Corollary 4 of Ref. [7]. Now the isometry, which corresponds to code $[[N,\,N-4,\,3]]_d$, acts on a party with local dimension $p$ of each state in the maximally entangled subspace. The input dimension of the isometry, $d^{N-4}$, hence must be greater or equal $p$, and we have the condition
\begin{equation}
    N\geqslant 4+\log_d p.
\end{equation}

Summing the results, we can formulate the extension of Proposition~\ref{prop23} 
\begin{prop}
The following holds.
\begin{enumerate}
    \item For $2<d\leqslant 8$ there exists a $2$-uniform subspace of $\ts{\C^2}{2}\otimes\ts{\C^d}{N}$ with dimension $2$ for any $N\geqslant 6$.
    \item For $d>8$  there exists a $2$-uniform subspace of $\ts{\C^2}{2}\otimes\ts{\C^d}{N}$ with dimension $2$ for any $N\geqslant 5$.
    \item for $d>2$ and $p\geqslant2$ there exists a $2$-uniform subspace of $\C^2\otimes\ts{\C^d}{N}$ with dimension $\floor{\frac p2}$   for any $N\geqslant 4+\log_d p$.
\end{enumerate}
\end{prop}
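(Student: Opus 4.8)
The plan is to read the three items as three applications of a single compositional recipe already rehearsed in the examples of Figs.~\ref{fig:23st2}--\ref{fig:2unitrb}: fix a small $2$-uniform ``seed'' object, graft onto one of its subsystems the encoding isometry $V$ of a qudit code whose range is a $2$-uniform subspace of $\ts{\C^d}{N}$, and deduce uniformity of the result from the rule of Fig.~\ref{fig:rIso}. The codes feeding $V$ are the family $[[N,\,N-4,\,3]]_d$, $4\leqslant N\leqslant d^2+1$, whose codespace is a $2$-uniform subspace of $\ts{\C^d}{N}$ of dimension $d^{N-4}$, supplemented by their tensor glueings through Lemma~\ref{glue} when $N$ exceeds $d^2+1$. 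Since an isometry is injective it carries the seed subspace to a subspace of the same dimension, so the output dimensions ($2$ and $\floor{\frac p2}$) are inherited directly from the seeds.

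For items~1 and~2 I would take as seed the $2$-dimensional $2$-uniform subspace of the $[[5,\,1,\,3]]_2$ code in $\ts{\C^2}{5}$, keep two of its qubits, and apply $V$ to the remaining three-qubit block, of joint dimension $8$. The rule of Fig.~\ref{fig:rIso} fires provided the range of $V$ is $2$-uniform and its input dimension $8$ does not exceed that range's dimension, i.e.\ $d^{N-4}\geqslant8$. This single inequality separates the two items: the block $N=5$ has dimension $d$, which reaches the required value $8$ only once $d\geqslant8$; assigning the equality case $d=8$ to item~1, one obtains the threshold $N\geqslant6$ for $2<d\leqslant8$ (the smallest usable block then being $N=6$, of dimension $d^2\geqslant9$) and $N\geqslant5$ for $d>8$. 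Values of $N$ past $d^2+1$ are reached by glueing one ``anchor'' block of dimension $\geqslant8$ to further $2$-uniform qudit subspaces via Lemma~\ref{glue}, each extra factor only enlarging the total dimension.

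Item~3 runs the same recipe with a different seed: a maximally entangled subspace of $\C^2\otimes\C^p$, of dimension $\floor{\frac p2}$ by Corollary~4 of Ref.~\cite{GW07}, with $V$ grafted onto its $\C^p$ leg. Here the input dimension is $p$, so the firing condition $p\leqslant d^{N-4}$ becomes $N\geqslant4+\log_d p$, and $p=2$ recovers the $\C^2\otimes\ts{\C^d}{N}$ statement with $N\geqslant5$. In every case the $2$-uniformity check is the routine case analysis of Figs.~\ref{fig:2unitra} and~\ref{fig:2unitrb}: reducing to two output legs of $V$ collapses the $V$ side by Fig.~\ref{fig:rIso} and traces the seed out completely, leaving the maximally mixed operator; reducing to one output leg together with a retained $\C^2$ party collapses the $V$ side to $\frac1d I$ and leaves the reduction of the seed to that $\C^2$ party, which is maximally mixed precisely because the seed is maximally entangled; for items~1 and~2 one also uses that reductions of the $[[5,\,1,\,3]]_2$ codewords to one or both retained qubits are maximally mixed, which holds by their $2$-uniformity.

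The main obstacle I anticipate is bookkeeping rather than principle: establishing that for \emph{every} admissible $N$ one can actually assemble an isometry whose range is a $2$-uniform subspace of adequate dimension. This reduces to a covering argument --- each $N$ above the stated threshold must be written as a sum of block sizes drawn from $[4,\,d^2+1]$ containing at least one anchor of dimension $\geqslant8$ (items~1 and~2) or $\geqslant p$ (item~3) --- which the overlapping-intervals estimate $\bigcup_k[4k,\,(d^2+1)k]=[4,\infty)$ settles for $d\geqslant3$, the boundary blocks and the cutoff $d=8$ being checked by hand. The one delicate spot is item~3 for $N$ just above $d^2+1$ with $p$ close to the Singleton value $d^{N-4}$: there the available glued blocks can fall short of dimension $p$, since QMDS distance-$3$ codes cease to exist beyond $N=d^2+1$, so the clean regime is the single-code range $4+\log_d p\leqslant N\leqslant d^2+1$, glueing being invoked only where the target dimension remains within reach of the available blocks.
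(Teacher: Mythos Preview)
Your proposal is correct and follows essentially the same construction as the paper: the $[[5,1,3]]_2$ seed with the isometry applied to a three-qubit block for items~1--2, the maximally entangled subspace of $\C^2\otimes\C^p$ for item~3, and the code family $[[N,N-4,3]]_d$ (glued via Lemma~\ref{glue} for large $N$) as the source of $V$. The concern you flag about item~3 for $N$ just past $d^2+1$ with $p$ near the Singleton value is legitimate, but the paper's own discussion leaves this covering issue equally implicit.
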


The above examples show that the presented approach is more effective in constructing $r$-uniform states and subspaces with larger local dimensions, i.~e., qutrits or higher. Indeed, there are not many qubit isometries for a given value of the uniformity parameter, and, in reproducing some results of Proposition~\ref{prop1r}, we had to resort to splitting the codes of higher dimensionality. A similar tendency was observed in Ref.~\cite{KVAnt21} where genuinely entangled subspaces were constructed with the isometric mapping method: when local dimension goes to infinity, the dimension of the obtained subspaces asymptotically approaches the maximal possible value.

Finally, let us provide some constructions with higher values of the uniformity parameter.

Consider the pure QMDS code $[[10,\,4,\,4]]_3$ from Theorem 13 of Ref.~\cite{GR15}. From Corollary~\ref{abcor} with $\alpha=\beta=1$, we obtain a $5$-uniform $9$-dimensional subspace of $\ts{\C^3}{18}$. The corresponding isometry $V$ has the input dimension equal to $9$. Let us take a maximally entangled subspace of $\C^2\otimes\C^9$, which has dimension equal to $\floor{\frac92}=4$~(Corollary 4 of Ref. [7]). Action of $V$ on the second party of each state in this subspace yields a $5$-uniform $4$-dimensional subspace of $\C^2\otimes\ts{\C^3}{18}$. With the same isometry $V$ we could act instead on the joint subsystem of three parties of each state in the code space $[[5,\,1,\,3]]_2$, and this procedure yields a  $5$-uniform $2$-dimensional subspace of $\ts{\C^2}{2}\otimes\ts{\C^3}{18}$.

Consider the $[[10,\,0,\,6]]_4$ code, which can be obtained from the classical  $[10,\,5,\,6]$  MDS code over $GF(16)$ of Ref.~\cite{GKL08} by the correspondence between stabilizer QMDS codes and self-dual classical MDS codes~(Theorem 15 of Ref.~\cite{KKKS06}, see also Proposition~15 of Ref.~\cite{HuGra20}). By elimination of one party a code $((9,\,4,\,5))_4$ is constructed~(Theorem~20 of Ref.~\cite{Rains98}). By splitting the latter code we obtain a $((18,\,4,\,5))_2$ code whose encoding isometry has the input dimension equal to $4$. Applying this isometry to one of the parties of a maximally entangled state in $\C^3\otimes\C^3$ produces a $4$-uniform state in $\C^3\otimes\ts{\C^2}{18}$.

\section{\label{sec:conc} Discussion}

In this paper we've shown how new $r$-uniform states and subspaces can be constructed from combining already known quantum error correcting codes, (maximally)~entangled states and subspaces. The isometric mapping method played the key role here: one takes an isometry, which, as its range, has a subspace with some useful property, and applies it to states or  subspaces, perhaps with some other interesting property. This approach allowed us to complement some results which were obtained with the mixed orthogonal arrays method. It would be interesting to continue this parallel with OAs. One example in this direction could be analyzing  encoding isometries of the QECCs obtained with OAs, for instance, the ones from Ref.~\cite{PXC22}. This could potentially lead to new OA and MOA constructions.

The advantage of the presented approach is its experimental accessibility:  whenever one can realize  encoding isometries of QECCs as well as prepare auxiliary entangled states, one can construct uniform states in accordance with the described above procedures. The disadvantage of the approach is that it doesn't utilize the internal structure of the combined objects  beyond their uniformity property. Taking more structural properties into account could result in constructing more classes of useful states and subspaces such as, for example,  AME states, the ones we couldn't produce with the current approach. This observation suggests another direction of further research: how to combine several QECCs in the most efficient way, with taking their specific properties into account, to obtain a new QECC with ``good'' characteristics~(in the sense similar to the recent ``good quantum codes'' constructions~\cite{HHD20,PK21}). We stress that the distance of the codes composed by the procedure of Theorem~\ref{codeampl} doesn't scale with the number of codes being combined: the distance of the resulting code will always be upper-bounded by the minimum of the number in Eq.~(\ref{minbound}) taken over all the codes being combined.  

It also would be interesting to apply the isometric mapping method to construction of multipartite subspaces  with another useful property -- distillability and closely related to it non-positivity of partial transpose across each bipartition~(distillable and NPT subspaces). This direction of research could complement the results obtained in Refs.~\cite{NJohn13,AgHalBa19,JLP19,MKA22}.

\section*{Acknowledgments}
The author thanks M. V. Lomonosov Moscow State University for supporting this work.

\nocite{*}

\bibliography{refs}

\end{document}